\numberwithin{equation}{section}
\newtheorem{theorem}{Theorem}
\newtheorem{lemma}{Lemma}
\newtheorem{definition}{Definition}
\numberwithin{lemma}{section}
\numberwithin{proposition}{section}
\numberwithin{corollary}{section}
\numberwithin{theorem}{section}
\numberwithin{equation}{section}
\numberwithin{example}{section}
\begin{document}

\title
  {\bf The Cauchy problem of the Ward equation with mixed scattering data
  }

\author   { Derchyi Wu }
\maketitle

\begin{center}
INSTITUTE OF MATHEMATICS
\end{center}

\begin{center}
ACADEMIA SINICA
\end{center}

\begin{center}
TAIPEI, TAIWAN, R. O. C.
\end{center}

\begin{center}
mawudc@math.sinica.edu.tw
\end{center}

\maketitle

\section*{{Abstract}}
We solve the Cauchy problem of the Ward equation with both continuous and discrete scattering data.  

\vskip.25in
\keywords{\textit{Keywords:} Self-dual Yang-Mills equation, Lax pair, inverse scattering problem,  soliton solutions, Backlund transformation}

\section{Introduction }

\vskip.2in
The Ward equation (or the modified $2+1$ chiral model) 
\begin{equation}
\partial_t\left(J^{-1}\partial_t J\right)-\partial_x\left(J^{-1}\partial_xJ\right)-\partial_y\left(J^{-1}\partial_yJ\right)-\left[J^{-1}\partial_tJ,J^{-1}\partial_yJ\right]=0,\label{E:Ward}
\end{equation}
for $J:\mathbb{R}^{2,1}\to SU(n)$, $\partial_w=\partial/\partial w$, is obtained from a dimension reduction and a gauge fixing of the self-dual Yang-Mills equation on $\mathbb{R}^{2,2}$ \cite{DTU}, \cite{T06}. It is an integrable system which possesses the Lax pair \cite{W88}
\begin{equation}
\left[\lambda\partial_x-\partial_\xi-J^{-1}\partial_\xi J,\lambda\partial_\eta-\partial_x-J^{-1}\partial_x J\right]=0\label{E:Lax3}
\end{equation}
with $\xi=\frac{t+y}2$, $\eta=\frac{t-y}2$.  Note (\ref{E:Lax3}) implies that $J^{-1}\partial_\xi J=-\partial_x Q$, $J^{-1}\partial_x J=-\partial_\eta Q$. Then by a change of variables $(\eta,x,\xi)\rightarrow (x,y,t)$, (\ref{E:Lax3}) is equivalent to
\begin{eqnarray}
&&(\partial_y-\lambda\partial_x)\Psi(x,y,t,\lambda)=\left(\partial_x Q(x,y,t)\right)\Psi(x,y,t,\lambda), \label{E:Lax01}\\
&&(\partial_t-\lambda\partial_y)\Psi(x,y,t,\lambda)=\left(\partial_yQ(x,y,t)\right)\Psi(x,y,t,\lambda) \label{E:Lax02}
\end{eqnarray}
\cite{FI01}, and the Ward equation (\ref{E:Ward}) turns into:
\begin{equation}
\partial_x\partial_tQ=\partial^2_yQ+\left[\partial_yQ,\partial_xQ\right].\label{E:chiral}
\end{equation}

The construction of solitons, and the study of the scattering properties of solitons of the Ward equation have been studied intensively by solving the degenerated Riemann-Hilbert problem and studying the limiting method 
\cite{W88}, \cite{W95}, \cite{I96}, \cite{An97}, \cite{An98}, \cite{IZ98}. In particular, Dai and Terng gave an explicit construction of all solitons of the Ward equation by establishing a theory of Backlund transformation \cite{DT07}.

On the other hand, 
 Villarroel \cite{V90}, Fokas and Ioannidou \cite{FI01}, Dai, Terng and Uhlenbeck \cite{DTU} investigate the scattering and inverse scattering problem and solve the Cauchy problem of the Ward equation if the initial potential is sufficiently small.  
Under the small data condition, the eigenfunctions $\Psi$ 
 possesses continuous scattering data only and therefore the solutions for the Ward equation do not include the solitons in previous study. 
 
Observing the similarity between Lax pairs of the Ward equation and the AKNS system \cite{BC84}, \cite{BC85} and transforming the spectral equation (\ref{E:Lax01}) into a $\bar\partial$-equation, the author  uses the Beals-Coifman scheme, the $L_2$-theory of the Cauchy integral operator and estimation of the singular integral to solve the scattering and inverse scattering problem and the Cauchy problem of the Ward equation if the initial potential possesses purely continuous scattering data \cite{Wu}.

The purpose of the present paper is to apply the above existence theory and the Backlund transformation theory to solve the the Cauchy problem of the Ward equation with mixed scattering data. The scheme we will follow was developed by \cite{TU00}, \cite{DTU} to construct AKNS flows and Ward equations. 
More precisely, we will generalize the results of  \cite{V90}, \cite{FI01}, \cite{DTU}, \cite{Wu} by the following theorem:

\begin{theorem}\label{T:cauchy2}
If $Q_0\in \mathbb P_{\infty,k,1}$, $k\ge 7$ (see Definition \ref{D:L11}), and the poles of $\Psi_0$  (including the multiplicity) are finite and contained in $\mathbb C\backslash\mathbb R$, then the Cauchy problem of the Ward equation (\ref{E:chiral}) with initial condition $Q(x,y,0)=Q_0(x,y)$
admits a smooth global solution  satisfying: 
for $i+j+h\le k-4$, $i^2+j^2>0$,
\begin{eqnarray*}
&&\partial_x Q(x,y,t)\in su(n),\\
&&\partial_x^i\partial_y^j\partial_t^hQ,\,\partial_tQ,\,Q\in L_\infty,\\
&&\partial_x^i\partial_y^j\partial_t^hQ,\,\partial_tQ,\,Q\to 0, \,{\textit as }\,\,x,\,y,\,t\to\infty.
\end{eqnarray*}
\end{theorem}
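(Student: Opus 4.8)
The plan is to reduce the mixed-data Cauchy problem to the purely continuous case solved in \cite{Wu}. The scattering data of $\Psi_0$ consist of a continuous part together with finitely many poles lying off $\mathbb R$, so I would first strip off the poles by an inverse B\"acklund (dressing) transformation, then solve the resulting continuous-data Cauchy problem by \cite{Wu}, and finally restore the solitons by a time-dependent B\"acklund transformation, following the loop-group scheme of \cite{TU00}, \cite{DTU}, \cite{DT07}.

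\emph{Reduction.} Encode the poles of $\Psi_0$, counted with multiplicity, into a rational loop-group element $g_0(\lambda)$ whose only $\lambda$-singularities are exactly those poles (a product of simple B\"acklund factors, together with their limiting higher-order versions for repeated poles, as constructed in \cite{DT07}). Acting by $g_0^{-1}$ on $\Psi_0$ produces an eigenfunction $m_0(x,y,\lambda)$ that is holomorphic in $\lambda$ away from the continuous spectrum, hence carries purely continuous scattering data, and has a reduced potential $\widehat Q_0$. The point is that, because all poles lie in $\mathbb C\backslash\mathbb R$, both $g_0$ and $g_0^{-1}$ are smooth and uniformly bounded for $\lambda\in\mathbb R$, so the reduction does not disturb the analytic and decay structure exploited in \cite{Wu}; I would verify that $\widehat Q_0\in\mathbb P_{\infty,k,1}$ and that its eigenfunction is pole free, so that the existence theorem of \cite{Wu} applies.

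\emph{Solving and re-dressing.} Applying \cite{Wu} to $\widehat Q_0$ yields a smooth global solution $\widehat Q(x,y,t)$ of \eqref{E:chiral} with the regularity and decay recorded there, together with its normalized eigenfunction $m(x,y,t,\lambda)$ solving \eqref{E:Lax01}--\eqref{E:Lax02}. I would then evolve the discrete data under the flow dictated by the Lax pair --- the pole locations stay fixed, while the associated subspaces solve explicit linear ODEs in $(x,y,t)$ --- assemble the evolved B\"acklund factor $g(x,y,t,\lambda)$, let it act on $m$ by the dressing action to obtain a new eigenfunction $\Psi(x,y,t,\lambda)$, and read off $Q(x,y,t)$ from the $\lambda\to\infty$ expansion of $\Psi$. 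By the B\"acklund transformation theory of \cite{DT07} adapted to the time-dependent pair \eqref{E:Lax01}--\eqref{E:Lax02}, $Q$ solves \eqref{E:chiral} and satisfies $\partial_x Q\in su(n)$; and since applying $g_0$ back at $t=0$ reconstructs $\Psi_0$, one obtains $Q(\cdot,\cdot,0)=Q_0$.

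\emph{The main obstacle.} The crux is propagating regularity and decay through the dressing. One must show that $g(x,y,t,\lambda)$ stays nondegenerate, i.e.\ invertible for every $\lambda\in\mathbb R$ and every $(x,y,t)$ --- this is exactly where keeping the poles in $\mathbb C\backslash\mathbb R$ along the flow is essential --- and that $g^{\pm1}$, together with all the $(x,y,t)$-derivatives entering the reconstruction, are bounded uniformly. The B\"acklund formula expresses $Q$ polynomially in $m$, in finitely many $\lambda$- and $(x,y,t)$-derivatives of $m$ evaluated at the poles, and in the entries of $g$; combining the estimates of \cite{Wu} for $m$ with the smoothness and uniform boundedness of $g$ then yields $\partial_x^i\partial_y^j\partial_t^h Q\in L_\infty$ for $i+j+h\le k-4$, the loss of derivatives accounting both for the loss already present in \cite{Wu} and for the derivatives of $m$ appearing in the B\"acklund formula. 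Finally, the decay as $x,y,t\to\infty$ follows by adding the decay of $\widehat Q$ and its derivatives from \cite{Wu} to the localization of the soliton correction produced by the pole factors, the restriction $i^2+j^2>0$ reflecting that only the derivatives of $Q$, and not $Q$ itself, must vanish at infinity.
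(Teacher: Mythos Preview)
Your high-level strategy---strip off the poles by an inverse dressing, solve the resulting continuous-data problem via \cite{Wu}, then dress the solitons back---is exactly the scheme the paper follows. The difference is in how the continuous step is fed into \cite{Wu}, and here your sketch has a genuine gap. You say you ``would verify that $\widehat Q_0\in\mathbb P_{\infty,k,1}$'' so that Theorem~\ref{T:cauchy} applies to the undressed potential. That class is defined by weighted $L_1$ bounds on the partial Fourier transform $\widehat q(\xi,y)$ (Definition~\ref{D:L11}), and the correction produced by removing a rational factor is typically only rationally decaying in $(x,y)$; there is no evident reason these Fourier-side bounds survive the undressing. The paper never attempts this. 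In Lemma~\ref{L:decom} the continuous factor $f_0$ is \emph{constructed} as the solution of the Riemann--Hilbert problem with jump $v_c=\Psi_{0,-}^{-1}\Psi_{0,+}$, which is the continuous scattering data of the \emph{original} $Q_0$. Since $Q_0\in\mathbb P_{\infty,k,1}$ by hypothesis, Theorem~\ref{T:CSDsum} already supplies the constraints \eqref{E:real1'}--\eqref{E:0anal3'} on $v_c$, and the inverse-scattering/Cauchy machinery (Theorems~\ref{T:invexistence}, \ref{T:cauchy}) then runs purely on the scattering side, with no need to place the reduced potential back in $\mathbb P_{\infty,k,1}$.

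Two further ingredients in the paper are worth flagging, since your sketch absorbs them silently. First, for the factors $f_0$ and $g_0$ to be evolved separately one needs that each satisfies a spectral equation with a $\lambda$-independent coefficient; this is Lemma~\ref{L:tail}, proved by a Liouville argument on $(\mathcal L_\lambda\psi_2)\psi_2^{-1}$, and is what justifies your claim that $m_0$ ``has a reduced potential $\widehat Q_0$''. Second, rather than dressing all poles back in one shot, the paper normalizes poles into $\mathbb C^+$, factorizes the rational part by Uhlenbeck's theorem (Theorem~\ref{T:uhlenbeck}) into simple factors with distinct poles, and inducts on their number, invoking Theorem~\ref{T:cauchydiscrete} for the single-pole piece, the induction hypothesis for the remainder, and Theorem~\ref{T:backlund} to recombine. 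This induction also carries the regularity and decay \eqref{E:diff10}--\eqref{E:boundary10}, which in your version you propose to read off from an explicit polynomial B\"acklund formula.
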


\section{The Cauchy problem with purely continuous scattering data}\label{E:CauchyContinuous}
We summarize the results of \cite{Wu} which are necessary in proving Theorem \ref{T:cauchy2}. First of all, define the functional spaces
\begin{definition}\label{D:L11}
\begin{eqnarray*}
&&\mathbb P_{\infty,k_1,k_2}\\
=&&\{q_x(x,y):\mathbb{R}\times\mathbb{R}\to su(n)|\,\\
 &&\begin{array}{lll}
 |\xi^iy^s\widehat q|_{L_1(d\xi dy)},&||\xi^h\widehat q(\xi,y)|_{L_1(y)}|_{ L_2(d\xi)},&{}\\
|\partial_x^j\partial_y^lq|_{L_\infty},&\sup_y|\partial_x^j\partial_y^lq|_{L_1(dx)},&|\partial_x^j\partial_y^lq|_{L_1(dxdy)}<\infty\end{array}\\
&&\textit{ for $1\le i\le \max\{5,k_1\},\,\,\,\, 0\le j,\, l\le \max\{5,k_1\},\,\,1\le h\le k_1$, $0\le s\le k_2$}\,\}.\\&&\mathbb{DH}^k\\
=&&\{f\,| \partial_x^if(x,y)
\textit{ are uniformly bounded in } L_2(\mathbb R, dx), \,\,0\le i\le k.\}
\end{eqnarray*}
 \end{definition}

\begin{theorem}\label{T:LNexistence}
Let $Q\in \mathbb P_{\infty,k,0}$, $k\ge 2$. Then there is a bounded set $Z\subset \mathbb{C}\backslash \mathbb{R}$ such that 
\begin{itemize}
	\item $Z\cap \left(\mathbb C\backslash\mathbb R\right)$ is discrete in $\mathbb C\backslash\mathbb R$;
	\item For  $\lambda\in \mathbb{C}\backslash \left(\mathbb{R}\cup Z\right)$, the problem 
(\ref{E:Lax01}) has a unique solution $\Psi$ and $\Psi-1\in \mathbb {DH}^k$;
  \item For $(x,y)\in \mathbb{R}\times \mathbb{R}$, the eigenfunction $\Psi(x,y,\cdot)$ is meromorphic in $\lambda\in\mathbb{C}\backslash \mathbb{R}$ with poles precisely at the points of $Z$;
  \item $\Psi(x,y,\lambda)$ satisfies:
  \begin{eqnarray}
&&\lim_{|x|\to\infty}\Psi(\cdot,y,\lambda)=1,\,\,
\lim_{|y|\to\infty}\Psi(x,\cdot,\lambda)=1,\,\,\textit{ for  $\lambda\in \mathbb{C}\backslash \left(\mathbb{R}\cup Z\right)$}, \label{E:bdry}\\
&&\textit{$\Psi(x,y,\cdot)$ tends to $1$ uniformly as $|\lambda|\to\infty$}.\label{E:bdry''}
\end{eqnarray}
\item $\Psi(x,0,\lambda)$ satisfies:
  \begin{eqnarray}
&& \Psi-1\textit{ are uniformly bounded in ${DH}^k$ for $\lambda\in \mathbb{C}\backslash \left(\mathbb{R} \cup_{\lambda_j\in Z} D_\epsilon(\lambda_j)\right)$. }\label{E:bdry'}\\
&&\textit{For any $z_j\in\mathbb C\backslash\mathbb R$, fixing $\epsilon_k$ for $\forall k\ne j$ and letting $\epsilon_j\to 0$, these  }\nonumber\\
&&\textit{$L_2(dx)$-norms increase as $ {C_j}{\epsilon_j^{-h_j}}$ with uniform constants $C_j$, $h_j>0$;}\nonumber\\
&& \textit{$\Psi-1\to 0$  in ${DH}^{k}$ as $\lambda\to\infty$.}
\label{E:bdry'''}
\end{eqnarray}
  \end{itemize}
  Where $\epsilon>0$ is any given constant, $D_\epsilon(\lambda_j)$ is the disk of radius $\epsilon$ centered at $\lambda_j$. Finally, for $\lambda\notin Z$,
 \begin{eqnarray}
  &&\det \Psi(x,y,\lambda)\equiv 1,\,\,\textit{ }\label{E:det}\\
&&\Psi(x,y,\lambda)\Psi(x,y,\bar\lambda)^*=I.\label{E:reality2}
 \end{eqnarray} 
\end{theorem}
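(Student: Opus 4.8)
The plan is to realize $\Psi$ as the unique solution of a singular integral equation obtained by inverting the transport operator $L_\lambda=\partial_y-\lambda\partial_x$, and then to read off every analytic and algebraic assertion from that equation; this is the Beals--Coifman scheme adapted to (\ref{E:Lax01}) as in \cite{Wu}. First I would Fourier transform (\ref{E:Lax01}) in $x$, which converts $L_\lambda$ into the $y$-ODE family $\partial_y-i\lambda\xi$ indexed by the dual variable $\xi$ and turns the zeroth-order term $V:=\partial_x Q$ into a convolution. Solving each scalar ODE with the integrating factor $e^{i\lambda\xi y}$ and selecting, mode by mode, the endpoint of $y$-integration that keeps the exponential bounded---the choice being dictated by the sign of $\xi\,\mathrm{Im}\,\lambda$---produces a bounded right inverse for $L_\lambda$ on $\mathbb C\backslash\mathbb R$. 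Transforming back in $x$ exhibits this inverse as a Cauchy-type singular integral operator $\mathcal C_\lambda$, and the normalized eigenfunction is the solution of $(I-\mathcal C_\lambda V)\Psi=1$, with the normalization built into the endpoint choice so that $\Psi\to 1$ at spatial infinity.

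For existence and uniqueness I would carry out the analysis in $\mathbb{DH}^k$. The hypothesis $Q\in\mathbb P_{\infty,k,0}$, $k\ge 2$, provides exactly the smoothness and $L_1/L_2$ decay of $V=\partial_x Q$ needed to make $\mathcal C_\lambda V$ a compact operator on $\mathbb{DH}^k$ depending analytically on $\lambda\in\mathbb C\backslash\mathbb R$, via the $L_2$-boundedness of the Cauchy integral and the attendant singular-integral estimates. The analytic Fredholm theorem then yields a unique solution $\Psi$ with $\Psi-1\in\mathbb{DH}^k$ for every $\lambda$ off the set $Z$ of points where $I-\mathcal C_\lambda V$ has nontrivial kernel, and makes $\lambda\mapsto\Psi(x,y,\lambda)$ meromorphic on $\mathbb C\backslash\mathbb R$ with poles precisely on $Z$; by construction $Z$ is relatively closed and discrete in $\mathbb C\backslash\mathbb R$. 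Boundedness of $Z$ comes from a large-$\lambda$ estimate: as $|\lambda|\to\infty$ the spectral kernel of $\mathcal C_\lambda$ carries a decaying factor, so $\|\mathcal C_\lambda V\|\to 0$ and $I-\mathcal C_\lambda V$ is invertible outside a bounded region, leaving $Z$ bounded (its only possible accumulation being toward $\mathbb R$).

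The boundary statements follow from the same resolvent representation. As $|x|,|y|\to\infty$ the driving term $\mathcal C_\lambda(V\Psi)$ localizes and tends to $0$, giving (\ref{E:bdry}); the large-$\lambda$ operator-norm bound gives the uniform limit (\ref{E:bdry''}) and the $\mathbb{DH}^k$ decay as $\lambda\to\infty$ in (\ref{E:bdry'''}). The uniform $\mathbb{DH}^k$ bounds in (\ref{E:bdry'}) away from the poles amount to uniform invertibility of $I-\mathcal C_\lambda V$ on compact subsets of $\mathbb C\backslash(\mathbb R\cup Z)$, and the prescribed blow-up rate $C_j\epsilon_j^{-h_j}$ near a pole $z_j$ is read off from the Laurent expansion of $(I-\mathcal C_\lambda V)^{-1}$, with $h_j$ the pole order. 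I expect this last package to be the main obstacle: propagating the finitely many $x,y$-derivatives through the singular integral while keeping all the weighted $L_1$ and $L_2$ norms defining $\mathbb P_{\infty,k,0}$ and $\mathbb{DH}^k$ uniform, and pinning the exact $\epsilon_j$-dependence near $Z$, is precisely where the delicate estimation of the singular integral is required.

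Finally the algebraic constraints follow from symmetries of (\ref{E:Lax01}) together with uniqueness. Since $V=\partial_x Q\in su(n)$ is trace-free and $L_\lambda$ is a derivation, Jacobi's formula gives $L_\lambda\det\Psi=(\det\Psi)\,\mathrm{tr}(\Psi^{-1}V\Psi)=(\det\Psi)\,\mathrm{tr}\,V=0$; thus $\det\Psi$ solves the homogeneous transport equation and equals its boundary value $1$ from (\ref{E:bdry}), proving (\ref{E:det}). For (\ref{E:reality2}), anti-Hermiticity $V^*=-V$ implies, after taking the adjoint of (\ref{E:Lax01}) at $\bar\lambda$ and inverting, that $\left[\Psi(x,y,\bar\lambda)^*\right]^{-1}$ satisfies the same normalized problem as $\Psi(x,y,\lambda)$; uniqueness forces the two to coincide, which is exactly $\Psi(x,y,\lambda)\Psi(x,y,\bar\lambda)^*=I$.
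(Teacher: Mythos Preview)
The paper does not actually prove this theorem; its entire proof is the citation ``See Theorem 1.1, Corollary 4.1 and Lemma 4.14 in \cite{Wu}.'' Your sketch is a faithful outline of the Beals--Coifman method that \cite{Wu} carries out, exactly as the introduction of the present paper describes: convert (\ref{E:Lax01}) into a singular integral/$\bar\partial$ equation, use the $L_2$-theory of the Cauchy operator to get compactness of $\mathcal C_\lambda V$, invoke analytic Fredholm for meromorphy in $\lambda$ with a discrete bounded pole set, and deduce the algebraic identities (\ref{E:det}), (\ref{E:reality2}) from $su(n)$-symmetry plus uniqueness. So your proposal is correct and agrees with the cited approach; indeed you have supplied more detail than the paper itself does.
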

\begin{proof} See Theorem 1.1, Corollary 4.1 and Lemma 4.14 in \cite{Wu}. \end{proof}

\begin{theorem}\label{T:CSDsum}
For $Q\in {\mathbb P}_{\infty,k,1}$, $k\ge 7$, if $\Psi_\pm$ exist, then there exists uniquely a function $v(x, y, \lambda)$ such that 
\[
\Psi_{+}(x,y,\lambda)=\Psi_-(x,y,\lambda)v(x,y, \lambda), \,\, \lambda\in R.
\]Moreover, 
$v$ satisfies the algebraic constraints:
\begin{eqnarray}
&&\det \,(v)\equiv 1, \label{E:real1'}\\
&& v= v^*>0,\label{E:real2'}
\end{eqnarray}
and the analytic  constraints: for $i+j\le k-4$, 
\begin{eqnarray}
&&\mathcal L_\lambda v=0,\,\,v(x,y,\lambda)=v(x+\lambda y, \lambda)\textit{ for }\forall x,\,y\in\mathbb R,\label{E:0ana15'}\\
&&\partial_x^i\partial_y^j\left(v-1\right)\textit{ are uniformly bounded in $ L_\infty\cap L_2(\mathbb{R},d\lambda)\cap L_1(\mathbb{R},d\lambda)$};\label{E:0anal2'}\\ 
&&\partial_x^i\partial_y^j\left(v-1\right)\to 0\textit{ uniformly in  $L_\infty\cap L_2(\mathbb{R},d\lambda)\cap L_1(\mathbb{R},d\lambda)$   }\label{E:0ana14'}\\
&&\textit{as $|x|$ or $|y|\to\infty$;}\nonumber\\
&&\partial_\lambda v\, \textsl{ are in $  L_2(\mathbb{R},d\lambda)$ and the norms depend continuously on $x$, $y$.}\label{E:0anal3'}
\end{eqnarray}
Where 
$
\mathcal L_\lambda=\partial_y-\lambda\partial_x$. 
\end{theorem}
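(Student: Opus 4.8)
The plan is to produce $v$ explicitly and then verify the listed constraints one at a time. Since $\det\Psi_-\equiv1$ by (\ref{E:det}) taken on the boundary, $\Psi_-$ is invertible on $\mathbb R$, so one sets $v:=\Psi_-^{-1}\Psi_+$; uniqueness of a $v$ with $\Psi_+=\Psi_-v$ is then immediate from the invertibility of $\Psi_-$. The two algebraic constraints (\ref{E:real1'})--(\ref{E:real2'}) are short: $\det v=\det\Psi_-^{-1}\,\det\Psi_+=1$ by (\ref{E:det}); and passing to boundary values in the reality relation (\ref{E:reality2})---as $\lambda$ tends to a real point from one half-plane, $\bar\lambda$ tends to it from the other---gives $\Psi_+\Psi_-^{*}=I$ on $\mathbb R$, hence $\Psi_-^{-1}=\Psi_+^{*}$ and $v=\Psi_+^{*}\Psi_+$, which is Hermitian and, since $\xi^{*}v\xi=|\Psi_+\xi|^{2}>0$ for $\xi\ne0$ (again by $\det\Psi_+\equiv1$), positive definite.

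For the differential constraint (\ref{E:0ana15'}): taking boundary values in (\ref{E:Lax01}) gives $\mathcal L_\lambda\Psi_\pm=(\partial_xQ)\Psi_\pm$, and since $\mathcal L_\lambda=\partial_y-\lambda\partial_x$ is a first-order derivation, $\mathcal L_\lambda(\Psi_-^{-1})=-\Psi_-^{-1}(\partial_xQ)$; the Leibniz rule then gives $\mathcal L_\lambda v=-\Psi_-^{-1}(\partial_xQ)\Psi_++\Psi_-^{-1}(\partial_xQ)\Psi_+=0$. The characteristics of $\mathcal L_\lambda$ being the lines $x+\lambda y=\mathrm{const}$, this forces $v(x,y,\lambda)=v(x+\lambda y,\lambda)$. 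Writing $\hat v(z,\lambda):=v(z,0,\lambda)=\Psi_-(z,0,\lambda)^{-1}\Psi_+(z,0,\lambda)$, one has $\partial_x^i\partial_y^j v=\lambda^{j}(\partial_z^{i+j}\hat v)(x+\lambda y,\lambda)$ and $\partial_\lambda v=y(\partial_z\hat v)(x+\lambda y,\lambda)+(\partial_\lambda\hat v)(x+\lambda y,\lambda)$, so everything reduces to controlling $\hat v-1$ and its $z$- and $\lambda$-derivatives.

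For the analytic constraints (\ref{E:0anal2'})--(\ref{E:0ana14'}) I would feed the boundary-value versions of the $\mathbb{DH}^{k}$-estimates (\ref{E:bdry})--(\ref{E:bdry'''}) into the identity $\hat v-1=\Psi_-^{-1}(\Psi_+-\Psi_-)$---which is smooth in $\lambda$ on $\mathbb R$ because the poles of $\Psi$ lie off the real axis, so none of the $\epsilon_j^{-h_j}$ blow-up of (\ref{E:bdry'''}) intervenes. After the change of variable $z=x+\lambda y$ when $y\ne0$, the $L_\infty\cap L_2(\mathbb R,d\lambda)\cap L_1(\mathbb R,d\lambda)$ bounds on $\partial_x^i\partial_y^j(v-1)$ uniform in $(x,y)$, and their decay as $|x|$ or $|y|\to\infty$, reduce to the uniform-in-$\lambda$ $L_2(dz)$ control of the $z$-derivatives of $\hat v-1$ together with its decay in $z$ and as $|\lambda|\to\infty$, all inherited from Theorem \ref{T:LNexistence} (and needing only $i+j\le k-4$, with room to spare from $k\ge7$).

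The last constraint (\ref{E:0anal3'}) is the one genuinely new ingredient, and the step I expect to be the main obstacle: one must show $\partial_\lambda\hat v\in L_2(\mathbb R,d\lambda)$ with norms depending continuously on $(x,y)$. As $\partial_\lambda\hat v$ is built from $\partial_\lambda\Psi_\pm$, this requires differentiating in $\lambda$ the Beals--Coifman singular-integral equation that defines $\Psi$ and estimating the result in $L_2(d\lambda)$ via the $L_2$-theory of the Cauchy integral operator, as in \cite{Wu}; the $\lambda$-derivative produces exactly the extra weight captured by the $y$-weighted ($s\le1$) norms built into $\mathbb P_{\infty,k,1}$, which is why that weight appears in the hypothesis. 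Continuity in $(x,y)$ then follows from the continuous dependence on $(x,y)$ of the integral operator and its data.
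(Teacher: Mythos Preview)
Your proposal is correct and is essentially a reconstruction of the argument the paper defers to: the paper's own proof here is a single sentence noting that in the proof of Theorem~1.2 of \cite{Wu} the hypothesis $Z(\Psi)=\emptyset$ may be replaced by the existence of $\Psi_\pm$, and your outline (set $v=\Psi_-^{-1}\Psi_+$, read off (\ref{E:real1'})--(\ref{E:real2'}) from (\ref{E:det})--(\ref{E:reality2}), obtain $\mathcal L_\lambda v=0$ from the Lax equation, and extract the $L_p(d\lambda)$ and $\partial_\lambda$ estimates from the Beals--Coifman machinery, the latter using the $y$-weight in $\mathbb P_{\infty,k,1}$) is exactly what that cited proof does.

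One small caution on a step you pass over quickly: the reduction of the uniform-in-$(x,y)$ $L_1(d\lambda)$ and $L_2(d\lambda)$ bounds to $L_2(dz)$ control via the substitution $z=x+\lambda y$ is not by itself enough, since the Jacobian $|y|^{-1}$ destroys uniformity as $y\to0$. In \cite{Wu} the uniform $L_p(d\lambda)$ bounds come instead directly from the quantitative decay of $\Psi_\pm-1$ as $|\lambda|\to\infty$ (driven by the $\xi$-weights in the definition of $\mathbb P_{\infty,k,1}$, which is also what absorbs the $\lambda^j$ factor from $\partial_y^j$); you mention this decay, but it should be the primary mechanism rather than a supplement to the change of variable.
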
  
\begin{proof} Note it is of no harm to replace the condition $Z=Z(\Psi)=\phi$ by the existence of $\Psi_\pm$ in the proof of Theorem 1.2 in \cite{Wu}.
\end{proof}

\begin{theorem}\label{T:invexistence}
Suppose $v(x, y, \lambda)$ satisfies (\ref{E:real1'}), (\ref{E:real2'}), and (\ref{E:0anal2'})-(\ref{E:0anal3'}), $k\ge 7$. Then  
there exists a unique solution $\Psi(x,y,\cdot)$ for the Riemann-Hilbert problem $(\lambda\in\mathbb{R}, v(x, y, \lambda))$ such that
\begin{equation}
\Psi-1,\, \partial_x\Psi,\,\partial_y\Psi \textit{ are uniformly bounded in $L_2(\mathbb R, d\lambda)$.} \label{E:invbdry''}
\end{equation}
In addition, for each fixed $\lambda\notin\mathbb R$, and  $i+j\le k-4$,
\begin{gather}
\partial_x^i\partial_y^j\Psi\in L_\infty(dxdy),\label{E:invbdry'}\\
\partial_x^i\partial_y^j\left(\Psi-1\right)\to 0  \textit{ in $L_\infty(dxdy)$, as $x$ or $y\to \infty$}.\label{E:invbdry}
\end{gather}
Moreover,
\begin{eqnarray}
&&\det \Psi(x,y,\lambda)\equiv 1,\label{E:invsym0}\\
&&\Psi(x,y,t,\lambda)\Psi(x,y,t,\bar\lambda)^*\equiv 1.\label{E:invsym1}
\end{eqnarray}
\end{theorem}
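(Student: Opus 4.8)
\emph{Proof proposal.} The plan is to realize $\Psi$ as the solution of a Beals--Coifman singular integral equation on $\mathbb{R}$ and to exploit the positivity $v=v^{*}>0$ to invert it. Fix $(x,y)$ and write $w=v-1$. Using the one-sided factorization $v=1\cdot v$ (legitimate because $v>0$), reformulate the Riemann--Hilbert problem $(\lambda\in\mathbb{R},v)$ as $\Psi=1+C(\mu w)$, where $C$ denotes the Cauchy transform off $\mathbb{R}$, $C_{\pm}$ the Plemelj boundary projections, and $\mu=\Psi_{-}$ solves $(1-C_{w})\mu=1$ with $C_{w}f=C_{-}(fw)$. The analytic constraints (\ref{E:0anal2'}) guarantee that $w\in L_{\infty}\cap L_{2}\cap L_{1}(\mathbb{R},d\lambda)$ with norms uniformly bounded in $(x,y)$, so $C_{w}$ is bounded on $L_{2}(\mathbb{R},d\lambda)$ with uniformly bounded operator norm and the inhomogeneous term lies in $1+L_{2}$.

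The key step --- and the main obstacle --- is the invertibility of $1-C_{w}$ on $L_{2}(\mathbb{R},d\lambda)$. For this I would run the vanishing-lemma argument: any solution $m$ of the homogeneous problem ($m$ analytic off $\mathbb{R}$, $m_{+}=m_{-}v$ on $\mathbb{R}$, $m\to 0$ at $\infty$, with $m\in L_{2}(\mathbb{R},d\lambda)$ on each side) satisfies
\[
\int_{\mathbb{R}} m_{+}(\lambda)\,m_{-}(\lambda)^{*}\,d\lambda=\int_{\mathbb{R}} m_{-}(\lambda)\,v(x,y,\lambda)\,m_{-}(\lambda)^{*}\,d\lambda ,
\]
and the left-hand side vanishes upon closing the contour in $\mathbb{C}\setminus\mathbb{R}$ by analyticity and $L_{2}$-decay of $m$. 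Since $v=v^{*}>0$, the integrand on the right is a nonnegative Hermitian matrix, which forces $m_{-}\equiv 0$ and hence $m\equiv 0$; the same computation applied to the difference of two solutions gives the uniqueness asserted in (\ref{E:invbdry''}). Combined with the Fredholmness of $1-C_{w}$ of index $0$ (which uses $\det v\equiv 1$, so the relevant winding number vanishes, together with the $L_{1}\cap L_{2}$-decay of $w$), injectivity upgrades to invertibility, and the inverse has norm uniformly bounded in $(x,y)$ because $\|C_{w}\|$ and the vanishing estimate are uniform. This yields $\mu\in 1+L_{2}(\mathbb{R},d\lambda)$, hence $\Psi-1\in L_{2}(\mathbb{R},d\lambda)$ uniformly in $(x,y)$, and the solution is unique.

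Next I would extract the remaining estimates by differentiating the integral equation. Since $\partial_{x}^{i}\partial_{y}^{j}w$ is controlled in $L_{\infty}\cap L_{2}\cap L_{1}(\mathbb{R},d\lambda)$ for $i+j\le k-4$ by (\ref{E:0anal2'}), the equations $(1-C_{w})(\partial_{x}^{i}\partial_{y}^{j}\mu)=(\partial_{x}^{i}\partial_{y}^{j}C_{w})\mu+\cdots$ have right-hand sides in $L_{2}$, and iterating up to total order $k-4$ shows that every $\partial_{x}^{i}\partial_{y}^{j}\Psi$ has $L_{2}(\mathbb{R},d\lambda)$-bounds uniform in $(x,y)$, which in particular gives (\ref{E:invbdry''}) for $\partial_{x}\Psi,\partial_{y}\Psi$. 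For a fixed $\lambda\notin\mathbb{R}$, the Cauchy-integral representation of $\partial_{x}^{i}\partial_{y}^{j}\Psi$ together with Cauchy--Schwarz in the spectral variable (using $\mathrm{dist}(\lambda,\mathbb{R})>0$ and the uniform $L_{2}$-bounds on the $w$- and $\mu$-derivatives) gives $\partial_{x}^{i}\partial_{y}^{j}\Psi\in L_{\infty}(dxdy)$, i.e.\ (\ref{E:invbdry'}); feeding in the decay of $\partial_{x}^{i}\partial_{y}^{j}w$ as $|x|$ or $|y|\to\infty$ from (\ref{E:0ana14'}) forces the derivatives of $\mu$ toward their limits and hence the corresponding Cauchy integrals to $0$, which is (\ref{E:invbdry}).

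Finally, the symmetries follow from uniqueness. Because $\det v\equiv 1$ we have $\det\Psi_{+}=\det\Psi_{-}$ across $\mathbb{R}$, so $\det\Psi$ continues to an entire function tending to $1$ at $\infty$, and Liouville gives (\ref{E:invsym0}). For (\ref{E:invsym1}), set $\tilde\Psi(x,y,\lambda)=\bigl(\Psi(x,y,\bar\lambda)^{*}\bigr)^{-1}$, which is analytic off $\mathbb{R}$; using $v=v^{*}$ one checks that $\tilde\Psi_{+}=\tilde\Psi_{-}v$ on $\mathbb{R}$, $\tilde\Psi\to 1$ at $\infty$, and $\tilde\Psi-1,\partial_{x}\tilde\Psi,\partial_{y}\tilde\Psi\in L_{2}(\mathbb{R},d\lambda)$ uniformly (here $\det\Psi\equiv 1$ makes $\Psi^{-1}$ polynomial in the entries of $\Psi$, hence bounded). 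By the uniqueness already established, $\tilde\Psi=\Psi$, which is exactly $\Psi(x,y,\lambda)\Psi(x,y,\bar\lambda)^{*}\equiv 1$. I expect the genuinely delicate points to be the uniform Fredholm/invertibility bookkeeping for $1-C_{w}$ and the careful propagation of uniformity and of the $|x|,|y|\to\infty$ decay through the $(x,y)$-differentiated equations; all of these run parallel to the continuous-scattering analysis already carried out in \cite{Wu}.
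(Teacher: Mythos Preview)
Your proposal is correct and follows essentially the same route as the paper: the paper's proof is a citation to \cite{Wu} (Theorem~1.3 and Lemma~6.6 there), and the introduction makes explicit that \cite{Wu} proceeds via the Beals--Coifman scheme with the $L_2$-theory of the Cauchy operator, exactly as you outline---the positivity $v=v^*>0$ drives the vanishing lemma, $\det v\equiv 1$ handles the index and the determinant symmetry, and uniqueness yields the reality condition~(\ref{E:invsym1}).
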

\begin{proof} See Theorem 1.3 and Lemma 6.6 in \cite{Wu}. Note (\ref{E:0ana15'}) is not used in the proof of Theorem 1.3 in \cite{Wu}. It is only used in justifying the existence of $Q(x,y)$ (see the proof of Theorem 1.4 in \cite{Wu}). \end{proof}

\begin{theorem}\label{T:cauchy}
If $Q_0\in \mathbb P_{\infty,k,1}$, $k\ge 7$, and there are no poles of the eigenfunction $\Psi_0$ of $Q_0$, then the Cauchy problem of the Ward equation (\ref{E:chiral}) with initial condition $Q(x,y,0)=Q_0(x,y)$
admits a smooth global solution satisfying: for $i+j+h\le k-4$, $i^2+j^2>0$,
\begin{eqnarray*}
&&\partial_x Q(x,y,t)\in su(n),\\
&&\partial_x^i\partial_y^j\partial_t^hQ,\,\partial_tQ,\,Q\in L_\infty,\\
&&\partial_x^i\partial_y^j\partial_t^hQ,\,\partial_tQ,\,Q\to 0, \,{\textit as }\,\,x,\,y,\,t\to\infty.
\end{eqnarray*}
 
\end{theorem}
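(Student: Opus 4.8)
The plan is to run the inverse scattering transform in the form set up in \cite{TU00,DTU,Wu}: build the scattering data of $Q_0$, evolve it by the linear flow forced by the Lax pair, reconstruct $\Psi$ and hence $Q$ at each time via the Riemann--Hilbert problem of Theorem \ref{T:invexistence}, and then check that the resulting $Q$ solves (\ref{E:chiral}) with the prescribed initial data and regularity.

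\emph{Step 1: scattering data at $t=0$ and its time evolution.} Since $Q_0\in\mathbb P_{\infty,k,1}$, $k\ge7$, and $\Psi_0$ has no poles, Theorem \ref{T:LNexistence} furnishes eigenfunctions $\Psi_{0,\pm}$ on the two half planes and Theorem \ref{T:CSDsum} the jump matrix $v_0(x,y,\lambda)$ on $\mathbb R$ with $\det v_0\equiv1$, $v_0=v_0^{*}>0$, $\mathcal L_\lambda v_0=0$ (so $v_0=v_0(x+\lambda y,\lambda)$), together with the analytic bounds (\ref{E:0anal2'})--(\ref{E:0anal3'}). Because $Q$ is to decay, the jump matrix must satisfy the two free transport equations obtained from (\ref{E:Lax01})--(\ref{E:Lax02}) with $Q\equiv0$, namely $(\partial_y-\lambda\partial_x)v=0$ and $(\partial_t-\lambda\partial_y)v=0$; solving them forces
\[
v(x,y,t,\lambda):=v_0\bigl(x+\lambda y+\lambda^{2}t,\lambda\bigr),
\]
which I take as the definition of the evolved data. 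One must then verify that for every fixed $t$ this $v(\cdot,\cdot,t,\lambda)$ again satisfies the hypotheses (\ref{E:real1'}), (\ref{E:real2'}), (\ref{E:0anal2'})--(\ref{E:0anal3'}) of Theorem \ref{T:invexistence}, with norms locally bounded and continuous in $t$. The algebraic constraints are pointwise in $\lambda$ and survive the shift; for the analytic ones the substitution $\zeta=x+\lambda y+\lambda^{2}t$ is a translation of the first slot, so the $L_\infty\cap L_1\cap L_2(d\lambda)$ bounds on $\partial_x^i\partial_y^j(v-1)$ are preserved, whereas $\partial_\lambda v$ acquires the factor $y+2\lambda t$ and hence needs decay of $\partial_\zeta v_0$ in $\lambda$ --- precisely the role of the $y^{s}$ ($s\le1$) weights and the mixed $L_1(y)\,L_2(d\xi)$ bounds on $\widehat q$ in Definition \ref{D:L11}.

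\emph{Step 2: reconstruction of $\Psi$ and $Q$.} For each $t$, Theorem \ref{T:invexistence} gives a unique $\Psi(\cdot,\cdot,t,\lambda)$ solving the Riemann--Hilbert problem $(\lambda\in\mathbb R,\,v(x,y,t,\lambda))$ with the regularity (\ref{E:invbdry''})--(\ref{E:invbdry}) and the symmetries (\ref{E:invsym0})--(\ref{E:invsym1}). Expanding $\Psi=1+\Psi_1(x,y,t)\lambda^{-1}+O(\lambda^{-2})$ as $\lambda\to\infty$ and matching the $\lambda^{0}$ coefficient of (\ref{E:Lax01}) yields $\partial_xQ=-\partial_x\Psi_1$; normalising $Q\to0$ at infinity we set $Q:=-\Psi_1$. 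A standard dressing argument then shows $\Psi$ solves the full Lax pair: because $v$ satisfies the two free transport equations, $(\partial_y-\lambda\partial_x)\Psi\cdot\Psi^{-1}$ and $(\partial_t-\lambda\partial_y)\Psi\cdot\Psi^{-1}$ extend analytically across $\mathbb R$ and are bounded --- hence, by Liouville, constant in $\lambda$ --- and evaluating at $\lambda=\infty$ identifies them with $-\partial_x\Psi_1=\partial_xQ$ and $-\partial_y\Psi_1=\partial_yQ$; here (\ref{E:0ana15'}) enters exactly as in the proof of Theorem~1.4 of \cite{Wu}. Compatibility of (\ref{E:Lax01})--(\ref{E:Lax02}) is (\ref{E:chiral}), so $Q$ solves the Ward equation; at $t=0$ the jump is $v_0$, the scattering data of $Q_0$, so the uniqueness in Theorem \ref{T:invexistence} forces the reconstructed potential to be $Q_0$. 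The reality relation (\ref{E:invsym1}) gives $\partial_xQ\in su(n)$, and the bounds $\partial_x^i\partial_y^j\partial_t^hQ,\partial_tQ,Q\in L_\infty$ for $i+j+h\le k-4$, $i^2+j^2>0$, follow from (\ref{E:invbdry'})--(\ref{E:invbdry}) after trading each $\partial_t$ for $\partial_x,\partial_y$ by means of (\ref{E:chiral}) and the Lax equations.

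\emph{Step 3: decay and the main difficulty.} From the Riemann--Hilbert representation, $\Psi_1(x,y,t)=\frac1{2\pi i}\int_{\mathbb R}\Psi_-(x,y,t,\lambda)\bigl(v_0(x+\lambda y+\lambda^{2}t,\lambda)-1\bigr)\,d\lambda$, the integrand being $L_1(d\lambda)$ by (\ref{E:invbdry''}) and (\ref{E:0anal2'}); for $x,y$ bounded and $t\to\infty$ the argument $x+\lambda y+\lambda^{2}t$ is large off a $\lambda$-set of measure roughly $O(t^{-1/2})$ near $\lambda=0$, so combining the $L_1(d\lambda)$-integrability of $v_0-1$ with its decay in the first slot (a stationary-phase/dispersive estimate using Definition \ref{D:L11}) gives $\Psi_1\to0$, while decay as $|x|$ or $|y|\to\infty$ is immediate from (\ref{E:invbdry}); differentiated versions go the same way via Step 2. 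I expect the real obstacle to be Steps 1 and 3 together: propagating the \emph{quantitative} norms of Definition \ref{D:L11} through the $\lambda^{2}t$-shift --- keeping $\partial_\lambda v\in L_2(d\lambda)$ with $t$-continuous norms, and extracting genuine time decay uniformly down to $\lambda=0$ --- is where all the weight and integrability hypotheses on $Q_0$ are consumed; the rest is the inverse-scattering bookkeeping already in place in \cite{Wu}.
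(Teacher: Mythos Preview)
The paper does not actually prove this theorem. Theorem~\ref{T:cauchy} sits in Section~\ref{E:CauchyContinuous}, which the author introduces with ``We summarize the results of \cite{Wu} which are necessary in proving Theorem~\ref{T:cauchy2}''; unlike Theorems~\ref{T:LNexistence}--\ref{T:invexistence}, which carry short proof environments pointing to specific results in \cite{Wu}, Theorem~\ref{T:cauchy} is stated with no proof at all --- it is simply quoted from \cite{Wu} as input for the mixed-data argument of Section~\ref{S:cauchy2}.

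Your sketch is the inverse scattering transform: extract the jump $v_0$ via Theorems~\ref{T:LNexistence}--\ref{T:CSDsum}, evolve it by the characteristic shift $v_0(x+\lambda y+\lambda^2 t,\lambda)$, solve the Riemann--Hilbert problem at each $t$ via Theorem~\ref{T:invexistence}, recover $Q$ from the $\lambda^{-1}$ coefficient, and close the Lax pair by a Liouville argument. This is precisely the scheme of \cite{Wu} (and of \cite{V90,FI01,DTU} before it), so your approach matches what the present paper is importing. The points you flag as delicate --- propagating the $L_2(d\lambda)$ bound on $\partial_\lambda v$ through the $\lambda^2 t$ shift, and the decay as $t\to\infty$ --- are exactly where the weights in Definition~\ref{D:L11} are spent in \cite{Wu}; your diagnosis of the difficulty is accurate, but a full proof would require carrying out those estimates rather than gesturing at them.
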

\section{The Cauchy problem with purely discrete scattering data }\label{E:CauchyDiscrete}
We review  
results of \cite{DT07} which are necessary for the proof of Theorem \ref{T:cauchy2}.
\begin{definition}\label{D:loopgroup}
\begin{eqnarray*}
&&\textit{$\mathcal{D}=\{f: \mathbb{C}\to M_n( \mathbb{C})$ satisfies the following conditions:}\\
&&\hskip.5in\bullet\,\, f(\bar\lambda)^*f(\lambda)=1, \,\,\lim_{\lambda\to\infty}f=I.\\
&&\hskip.5in\bullet\,\, \textit{$f$ is meromorphic in $\mathbb{C}\backslash \mathbb{R}$ with possible finitely many poles.}\\
&&\hskip.5in\bullet\,\, \textit{$f\pm$ exist.}\};\\
&&\mathcal{D}_c=\{f\in \mathcal{D}\textit{ such that $f$ is holomorphic in $\mathbb{C}\backslash \mathbb{R}$ and $f_\pm$ exist.}\};\\
&&\mathcal{D}_r=\{f\in \mathcal{D}\textit{ such that $f$ is rational.}\}.\\
&&\textsl{$Z(f)=$  the set of poles of $f$, for $f\in\mathcal D$.}
\end{eqnarray*}\end{definition}
\begin{theorem}\label{T:uhlenbeck}
For $\psi\in \mathcal{D}_r$ and $Z(f)\subset\mathbb C^+$,  we can factorize $\psi$ as a product of $f_1\cdots f_p$. Where $f_i\in\mathcal D_r$, $Z(f_i)=\{z_i\}\subset\mathbb C^+$ and $z_i\ne z_j$ if $i\ne j$.
\end{theorem}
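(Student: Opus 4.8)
This is established in \cite{DT07}; the route I would take is to peel off the poles one \emph{location} at a time, by a double induction --- an outer one on the number $N$ of distinct poles of $\psi$ in $\mathbb C^+$, and, once a location is fixed, an inner one on the order of the pole of $\det\psi$ there. The elementary factors will be the rank-one simple elements $g_{z,v}(\lambda)=I+\tfrac{z-\bar z}{\lambda-z}\pi_v$, where $z\in\mathbb C^+$, $v\ne0$, and $\pi_v$ is the Hermitian projection onto $\mathbb C v$; a one-line computation gives $g_{z,v}(\infty)=I$, $g_{z,v}^{-1}(\lambda)=I+\tfrac{\bar z-z}{\lambda-\bar z}\pi_v=g_{z,v}(\bar\lambda)^*$, and $\det g_{z,v}(\lambda)=\tfrac{\lambda-\bar z}{\lambda-z}$, so $g_{z,v}\in\mathcal{D}_r$ with $Z(g_{z,v})=\{z\}$. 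The base case $N=0$ is trivial: a pole-free rational matrix with value $I$ at $\infty$ is $\equiv I$, and the factorization is the empty product.

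Before the induction I would record two rigidity facts, both valid for every $\phi\in\mathcal{D}_r$ with $Z(\phi)\subset\mathbb C^+$. (i) $D:=\det\phi$ is rational, $D(\infty)=1$, satisfies $\overline{D(\bar\lambda)}\,D(\lambda)=1$, and has all poles in $\mathbb C^+$; the standard finite-Blaschke-product argument then shows $D(\lambda)=\prod_\alpha\bigl(\tfrac{\lambda-\bar\alpha}{\lambda-\alpha}\bigr)^{m_\alpha}$, so that the zeros of $D$ are exactly the reflections in $\mathbb C^-$ of its poles. (ii) If $\phi$ has a matrix pole at $z\in\mathbb C^+$ then $\det\phi$ has a pole at $z$; indeed $\phi^{-1}(\lambda)=\phi(\bar\lambda)^*$ is then singular at $\bar z$, yet $\phi$ is holomorphic at $\bar z\in\mathbb C^-$, so the inverse can blow up there only if $\det\phi(\bar z)=0$, and by (i) a zero of $\det\phi$ at $\bar z$ is accompanied by a pole at the mirror point $z$. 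In particular, $\det\phi$ regular at $z$ forces $\phi$ regular at $z$.

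For the inductive step I fix $z_1\in Z(\psi)$ and let $m_1\ge1$ be the order of the pole of $\det\psi$ at $z_1$ (positive by (ii)). By (i) $\det\psi$ vanishes at $\bar z_1$, so $\psi(\bar z_1)$ is singular; pick $0\ne v_1\in\ker\psi(\bar z_1)^*$ and put $g_1=g_{z_1,v_1}$. Then $g_1^{-1}\psi$ is again in $\mathcal{D}_r$ --- it is rational, a product of reality-normalized factors tending to $I$ at $\infty$, and $g_1^{-1}$ is holomorphic and invertible off $\{z_1,\bar z_1\}$, so no new poles appear, in particular none on $\mathbb R$ --- and it has \emph{no} pole at $\bar z_1$, since the residue there is a multiple of $\pi_{v_1}\psi(\bar z_1)=0$. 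The poles at $z_2,\dots,z_N$ are unaffected, while $\det(g_1^{-1}\psi)=\tfrac{\lambda-z_1}{\lambda-\bar z_1}\det\psi$ has a pole of order $m_1-1$ at $z_1$. Iterating $m_1$ times produces $\psi=(g_1\cdots g_{m_1})\psi'$ with $\det\psi'$ regular at $z_1$, hence (by (ii)) $\psi'$ has no pole at $z_1$ and retains exactly the poles $z_2,\dots,z_N$. Putting $f_1:=g_1\cdots g_{m_1}\in\mathcal{D}_r$, which has $\det f_1=\bigl(\tfrac{\lambda-z_1}{\lambda-\bar z_1}\bigr)^{m_1}$ and therefore $Z(f_1)=\{z_1\}$, and applying the outer induction hypothesis to $\psi'$ to get $\psi'=f_2\cdots f_p$, I arrive at $\psi=f_1f_2\cdots f_p$ with the $z_i$ distinct.

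The delicate point --- the one I expect to be the main obstacle --- is the correctness of the inner peeling: one must be sure that after $m_1$ rank-one extractions the \emph{matrix} pole at $z_1$ is really gone and not merely thinned in the determinant. Left multiplication by $g_1^{-1}$ in general does not lower the matrix order of the pole at $z_1$, because $g_1^{-1}(z_1)=\pi_{v_1}^{\perp}$ need not annihilate the leading Laurent coefficient of $\psi$ there; so the induction cannot be run directly on the matrix pole order. It is precisely rigidity fact (ii) --- the coincidence, forced by the reality condition and the Blaschke structure, of matrix poles with determinantal poles --- that both furnishes the vectors $v_j$ at every stage and guarantees termination. The remaining checks (that the reality normalization $\psi'(\bar\lambda)^*\psi'(\lambda)=I$ and the existence of $\psi'_\pm$ persist at each step) are routine, since every $g_j$ lies in $\mathcal{D}_r$ and carries its poles to $z_1,\bar z_1$, off $\mathbb R$.
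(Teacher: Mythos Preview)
Your proof sketch is correct and is essentially the classical Uhlenbeck peeling argument: extract rank-one simple factors $g_{z,v}$ at a fixed pole location until the determinantal pole there is exhausted, invoke the reality/Blaschke rigidity to conclude the matrix pole is gone as well, and then induct on the number of distinct pole locations. The paper gives no independent proof of this statement---it simply cites Theorem~5.4 of \cite{U89} and Corollary~4.5 of \cite{DT07}---and your argument is a faithful reconstruction of exactly that route, so there is nothing further to compare.
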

\begin{proof} See Theorem 5.4 in \cite{U89} and Corollary 4.5 in \cite{DT07}. 
\end{proof}

\begin{definition}\label{D:fundamental} Let $A_0(x,y)$, $A(x,y,t)$, and $B(x,y,t)\in M_n( \mathbb{C})$.

\noindent\begin{itemize}
	\item $\eta_0(x,y)$ is called a fundamental solution of the system
\begin{eqnarray}
&&(\partial_y-z\partial_x)f=A_0(x,y)f,\label{E:fundsimple}
\end{eqnarray}
if for $\forall V(x,y)$ satisfies (\ref{E:fundsimple}), there exists $ H(x+zy)$ such that \[V(x,y)=\eta_0(x,y) H(x+zy).\]
\item $\eta(x,y,t)$ is called a fundamental solution of
\begin{eqnarray}
&&(\partial_y-z\partial_x)f=A(x,y,t)f,\label{E:fund1}\\
&&(\partial_t-z\partial_y)f=B(x,y,t)f,\label{E:fund2}
\end{eqnarray}
if for $\forall V(x,y,t)$ satisfies (\ref{E:fund1}), (\ref{E:fund2}), there exists $  H(x+zy+z^2t)$ such that \[V(x,y,t)=\eta(x,y,t) H(x+zy+z^2t).\]
\end{itemize}
\end{definition}

The next goal is to solve the following Cauchy problem (\ref{E:Laxini})-(\ref{E:bdc1}).
\begin{theorem}\label{T:cauchydiscrete}
Suppose there exist $A_0(x,y)$, $\Psi_0(x,y,\lambda)$ such that  
\begin{equation}
(\partial_y-\lambda\partial_x)\Psi_0(x,y,\lambda)=A_0(x,y)\Psi_0(x,y,\lambda), \label{E:Laxini}
\end{equation}
and
\begin{gather}
\textit{$\Psi_0(x,y,\cdot)\in\mathcal D_r$, $Z(\Psi_0)=\{z\}$ with multiplicity $k$,}\label{E:rational}\\
\textit{$\partial_x^i\partial_y^j\Psi_0\in L_\infty$ for $i+j\le N$, and $\lambda\in\mathbb C\backslash\mathbb \{z\}$ fixed, }\label{E:diff}\\
\textit{$\lim_{x^2+y^2\to\infty}\partial_x^i\partial_y^j\left(\Psi_0- C_1\right)= 0$ for $i+j\le N$, and $\lambda\in\mathbb C\backslash\mathbb \{z\}$ fixed. }\label{E:bdc}
\end{gather}
Where $C_1$ is independent of 	 $x$, $y$. 
Then  there exist $\Psi(x,y,t,\lambda)$, $A(x,y,t)$, $B(x,y,t)$ such that 
\begin{eqnarray}
&&(\partial_y-\lambda\partial_x)\Psi(x,y,t,\lambda)=A(x,y,t)\Psi(x,y,t,\lambda), \label{E:Lax03}\\
&&(\partial_t-\lambda\partial_y)\Psi(x,y,t,\lambda)=B(x,y,t)\Psi(x,y,t,\lambda), \label{E:Lax04}\\
&&\Psi(x,y,0,\lambda)=\Psi_0(x,y,\lambda),\,\,\,A(x,y,0)=A_0(x,y),\label{E:initial}
\end{eqnarray}
and
\begin{gather}
\textit{$\Psi(x,y,t,\cdot)\in\mathcal D_r$, $Z(\Psi)=\{z\}$ with multiplicity $k$,}\label{E:rational1}\\
\textit{$\partial_x^i\partial_y^j\partial_t^h\Psi\in L_\infty$ for $i+j+h\le N$,  and $\lambda\in\mathbb C\backslash\mathbb \{z\}$ fixed, }\label{E:diff1}\\
\textit{$\lim_{x^2+y^2+t^2\to\infty}\partial_x^i\partial_y^j\partial_t^h\left(\Psi_0- C_1\right)= 0$ for $i+j+h\le N$, and $\lambda\in\mathbb C\backslash\mathbb \{z\}$ fixed.}\label{E:bdc1}
\end{gather}
\end{theorem}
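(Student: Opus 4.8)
\emph{Proof strategy.} I would realize $\Psi_0$ as a Bäcklund transformation (a rational dressing) of the trivial background, transport the underlying discrete datum along the characteristic $x+zy\mapsto x+zy+z^2t$ — the only time dependence compatible with (\ref{E:Lax03})--(\ref{E:Lax04}) — and then re-dress, using Theorem \ref{T:uhlenbeck}, the fundamental solutions of Definition \ref{D:fundamental}, and the Bäcklund transformation formulas of \cite{DT07}. First I would read off the datum: since $\Psi_0(x,y,\cdot)\in\mathcal D_r$ with $Z(\Psi_0)=\{z\}$, write the principal part at $z$ as $\Psi_0=I+\sum_{j=1}^{k}\Gamma_j^0(x,y)\,(\lambda-z)^{-j}$ (the reality condition of $\mathcal D$ then fixes the principal part of $\Psi_0^{-1}$ at $\bar z$, and $A_0$ is the $O(\lambda^{-1})$ coefficient of $\Psi_0$). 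Substituting into (\ref{E:Laxini}) and matching powers of $(\lambda-z)$ gives the triangular system $(\partial_y-z\partial_x)\Gamma_j^0=A_0\Gamma_j^0+\partial_x\Gamma_{j+1}^0$ with $\Gamma_{k+1}^0=0$; its top equation is homogeneous, so by Definition \ref{D:fundamental} $\Gamma_k^0=\eta_0\,H_k(x+zy)$ for the fundamental solution $\eta_0$ of (\ref{E:fundsimple}) and some $H_k$, and the lower $\Gamma_j^0$ follow by solving inhomogeneous transport equations. The data $z$, $k$, and these functions of $s=x+zy$ form the discrete datum $\mathcal H_0$; the content of (\ref{E:Laxini}) is exactly that $\mathcal H_0$ depends on $(x,y)$ only through $s$ (and, by reality, $\bar s$).

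Next I would transport and re-dress: replace $s=x+zy$ by $s+z^2t$ throughout $\mathcal H_0$, obtaining a datum $\mathcal H_t$. By Theorem \ref{T:uhlenbeck} and the Bäcklund transformation theory of \cite{DT07} there is associated to $\mathcal H_t$ a rational factor $g_z(x,y,t,\lambda)$ with a single pole of order $k$ at $z$; taking $\Psi$ to be the Bäcklund transform of the trivial eigenfunction with this datum, and $A,B$ the induced potentials, yields a solution of (\ref{E:Lax03})--(\ref{E:Lax04}), the $\lambda$-independence of $A,B$ being precisely the Bäcklund cancellation, and the fundamental solution $\eta(x,y,t)$ of (\ref{E:fund1})--(\ref{E:fund2}) it produces is the $t$-extension of $\eta_0$. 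Since $\mathcal H_t|_{t=0}=\mathcal H_0$ and a Bäcklund transformation is determined by its datum, $\Psi|_{t=0}=\Psi_0$ and $A|_{t=0}=A_0$, which is (\ref{E:initial}); and $g_z$ adds exactly a pole of order $k$ at $z$, while reality and $\Psi\to I$ as $\lambda\to\infty$ are built in, so $\Psi(x,y,t,\cdot)\in\mathcal D_r$ with $Z(\Psi)=\{z\}$ of multiplicity $k$ — that is, (\ref{E:rational1}).

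For (\ref{E:diff1}) and (\ref{E:bdc1}), the dressing formula of \cite{DT07} expresses $\Psi$, $A$, $B$ as rational expressions in finitely many $\partial_x^a\partial_y^b\partial_t^c$-derivatives of $\mathcal H_t$, with denominators of Gram/Wronskian type that stay bounded below for all $(x,y,t)$ because $\operatorname{Im}z\neq0$ keeps the relevant Hermitian pairing nondegenerate — the analytic content of the reality condition, now propagated in $t$. The hypotheses (\ref{E:diff})--(\ref{E:bdc}) translate, via the first step, into $L_\infty$ bounds and decay for $\mathcal H_0$ and its derivatives; crucially, the single, direction-independent limit $C_1$ in (\ref{E:bdc}) forces $\mathcal H_0$ to be \emph{rational} in $s$ (an exponential datum would give direction-dependent limits), so $\mathcal H_t$ and its derivatives tend to their limits as $|x+zy+z^2t|\to\infty$. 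Feeding this into the dressing formula yields $\partial_x^i\partial_y^j\partial_t^h\Psi\in L_\infty$ for $i+j+h\le N$ and $\partial_x^i\partial_y^j\partial_t^h(\Psi-C_1)\to0$ as $x^2+y^2+t^2\to\infty$; the two normalizations are reconciled by observing that $C_1(\lambda)$, being the spatial-infinity value of the rational-in-$\lambda$ dressing, itself tends to $I$ as $\lambda\to\infty$.

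The main obstacle is the case $k>1$. Then $\mathcal H_0$ is a flag rather than a single projector-datum, the factor $g_z$ must be assembled as a $k$-fold iterate of rank-one Bäcklund steps at the \emph{coinciding} point $z$ (equivalently, as a limit $z_1,\dots,z_k\to z$ of the distinct-pole factorization of Theorem \ref{T:uhlenbeck}), and one must show: (i) the iterate is globally well-defined — the Gram/Wronskian determinants arising at each of the $k$ steps stay bounded away from $0$ for \emph{all} $(x,y,t)$, so $\Psi$ has exact pole order $k$ and no new singularities appear; (ii) the $L_\infty$ and decay bounds survive all $k$ steps with uniform constants; and (iii) the lower layers of $\mathcal H_t$ stay correctly slaved to the leading one under the $z^2$-flow. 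Establishing (i) — the persistence of nondegeneracy throughout the tower — is where I expect the real work to lie.
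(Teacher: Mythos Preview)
Your overall strategy---dress the trivial background, read off a discrete datum depending on $s=x+zy$, transport along $s\mapsto s+z^2t$, and re-dress---is correct and is the paper's strategy too. But the organization differs in a way that matters, and there are two genuine gaps.

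\textbf{Organization.} The paper does \emph{not} work with the full Laurent tail $\sum_{j=1}^k\Gamma_j^0(\lambda-z)^{-j}$. Instead it proceeds by induction on $k$, at each step using the minimal factorization of \cite{DT07} to peel off a single simple-pole factor $g_{z,\pi}=I+\frac{z-\bar z}{\lambda-z}\pi^\bot$; and---this is the key device---it \emph{simultaneously} constructs, at each step, a fundamental solution $\eta$ of (\ref{E:fund1})--(\ref{E:fund2}) (via an explicit Dai--Terng recipe from \cite[Theorem 8.6]{DT07}). The fundamental solution is not a byproduct: it is precisely what lets the next inductive step express the new projector data as $\tau_{j,0}=\eta_{k,0}H_j(x+zy)$, so that transporting $H_j(x+zy)\mapsto H_j(x+zy+z^2t)$ makes sense. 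Your sketch invokes $\eta_0$ (``by Definition~\ref{D:fundamental} $\Gamma_k^0=\eta_0H_k$''), but Definition~\ref{D:fundamental} only says what a fundamental solution \emph{is}; its existence is part of what is being proved, and in the paper it is built inductively alongside $\Psi$.

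\textbf{The rationality claim.} Your assertion that the direction-independent limit $C_1$ forces $\mathcal H_0$ to be rational in $s$ is not justified and is not how the paper proceeds. In the $k=1$ case the paper works with the Hermitian projection $\pi_0$ itself (which is automatically bounded), parametrized by a holomorphic $V:\mathbb C\setminus\{p_1,\dots,p_s\}\to M_{n\times r}$; condition (\ref{E:diff}) is used only to remove the apparent singularities of $V$, not to force rationality. Your heuristic ``exponential data give direction-dependent limits'' is not a proof, and in any case the correct object to bound is $\pi$, not the raw datum.

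\textbf{Where the two agree.} You correctly identify the nondegeneracy of the iterated dressing for $k>1$ as the crux. The paper does not prove this from scratch either; it invokes the Hermitian-projection structure (guaranteed by the reality condition in $\mathcal D_r$) together with \cite[Lemma 8.2, Prop.\ 7.1, 7.4, Theorem 8.6]{DT07}. So your honesty about (i) is appropriate---but the resolution is the minimal-factorization/fundamental-solution induction above, not a limit $z_1,\dots,z_k\to z$ of Theorem~\ref{T:uhlenbeck}, which applies only to \emph{distinct} poles.
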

\begin{proof} This theorem is indeed a rephrase of a very decent result (Theorem 8.6) in \cite{DT07}. For convenience,  the proof is sketched here. The theorem will be proved  by simultaneously establishing (\ref{E:Lax03})-(\ref{E:bdc1}),  and the following statements (\ref{E:fund3})-(\ref{E:fund6}): 
\begin{eqnarray}
	&&  \textsl{There exists a fundamental solution, $\eta_0(x,y)$, of (\ref{E:fundsimple});}\label{E:fund3}\\
	&&  \textsl{There exists a fundamental solution, $\eta(x,y,t)$, of (\ref{E:fund1}) and (\ref{E:fund2});}\label{E:fund4}\\
	&&\eta(x,y,0)=\eta_0(x,y);\label{E:fund5}\\
  &&\textit{$\partial_x^i\partial_y^j\partial_t^h\eta\in L_\infty$ for $i+j+h\le N$, and $\lambda\in\mathbb C\backslash\mathbb \{z\}$ fixed; }\label{E:diff2}\\	
	&&\textit{$\lim_{x^2+y^2+t^2\to\infty}\partial_x^i\partial_y^j\partial_t^h\left(\eta - C_2\right)=0$ for $i+j+h\le N$, and $\lambda\in\mathbb C\backslash\mathbb \{z\}$ fixed.}\label{E:fund6}
	\end{eqnarray}
 Where $C_2$ is independent of $x$, $y$. 

Suppose $k=1$. By the results of \cite{U89}, \cite{W88}, \cite{W89}, one has
\begin{eqnarray}
\Psi_0(x,y,\lambda)&=&1+\frac{z-\bar z}{\lambda-z}\pi_0^\bot(x,y),\label{E:p0}\\
A_0(x,y)&=&(z-\bar z)\partial_x\pi_0.\label{E:A0}
\end{eqnarray}
Where $\pi_0(x,y):\mathbb R\times\mathbb R\to\mathbb C^n $, $\pi_0^*=\pi_0$, $\pi_0^2=\pi_0$, $\mathrm {Im }\pi_0$ is spanned by   columns of 
\begin{equation}
V(x+zy),\qquad V:\mathbb C\to M_{n\times r} \textit{ is holomorphic except at $\{p_1,\cdots,p_s\}$},\label{E:V}
\end{equation}
and $M_{n\times r}$ denotes the space of rank $r$ complex $n\times r$ matrices. Note  by (\ref{E:diff}), one can find another $\tilde V(x+zy)$, $\tilde V:\mathbb C\to M_{n\times r}$ is holomorphic in the neighborhood of  $p_1$, $\cdots$, $p_s$, and $\mathrm {Im }\pi_0$ is spanned by   columns of $\tilde V$. Hence  (\ref{E:p0})-(\ref{E:V}) hold for $\forall x,\,y$.

Now let us define 
\begin{eqnarray*}
&&\Psi(x,y,t,\lambda)=1+\frac{z-\bar z}{\lambda-z}\pi^\bot(x,y,t),\\
&&\pi(x,y,t):\mathbb R\times\mathbb R\times\mathbb R\to\mathbb C^n , \,\,\pi^*=\pi,\,\,\pi^2=\pi,\\
&& \textit{$\mathrm{Im }\pi$ is spanned by columns of $V(x+zy+z^2t)$,}
\end{eqnarray*}
then  (\ref{E:Lax03})-(\ref{E:bdc1}) are satisfied with 
\begin{equation}
A(x,y,t)= (z-\bar z)\partial_x\pi,\qquad B(x,y,t)= (z-\bar z)\partial_y\pi.\label{E:AB}
\end{equation}

To prove (\ref{E:fund3})-(\ref{E:fund6}), by a Dai and Terng's construction (Theorem 8.6 in \cite{DT07}), one can find a fundamental solution $\eta(x,y,t)$ of (\ref{E:fund1}) and (\ref{E:fund2}) by defining 
\begin{gather*}
\eta(x,y,t)=\left[\xi_1\,\cdots\,\xi_{n_1}\,u_{n_1+1}\,\cdots\, u_n\right],\\
\xi_j=\lim_{\epsilon\to 0}\Psi(x,y,t,z+\epsilon)(a_j(x+(z+\epsilon)y+(z+\epsilon)^2t)),\label{E:2}\\
u_j=\lim_{\epsilon\to 0}\Psi(x,y,t,z+\epsilon)(\epsilon b_j(x+(z+\epsilon)y+(z+\epsilon)^2t)),\label{E:1}\\
\textsl{$a_1$, $\cdots$, $a_{n_1}$ span $\mathrm{Im }\pi$, $\pi^\bot(b_{n_1+1})$, $\cdots$, $\pi^\bot(b_n)$ span $\mathrm{Im }\pi^\bot$.}\label{E:3}
\end{gather*}
Defining $\eta_0(x,y)$ by (\ref{E:fund5}) and using (\ref{E:Lax03})-(\ref{E:initial}), (\ref{E:diff1}), (\ref{E:bdc1}), the invertibility of $\eta$, one has (\ref{E:fund3}), (\ref{E:diff2}), and (\ref{E:fund6}).


Now suppose (\ref{E:Lax03})-(\ref{E:fund6}) hold  for $k$ and assume that $Z(\Psi_0)=\{z\}$ with multiplicity $k+1$. Using the minimal factorization property of elements in $\mathcal D_r$ (see Section 4 and Theorem 8.1 in \cite{DT07}), one would obtain 
\begin{eqnarray}
&&\Psi_0(x,y,\lambda)=g_{z,\pi_{k+1,0}}\Psi_{k,0}(x,y,\lambda),\label{E:F1}\\
&&g_{z,{k+1,0}}(x,y,\lambda)=1+\frac{z-\bar z}{\lambda-z}\pi_{k+1,0}^\bot(x,y), \label{E:F2}\\
&&(\partial_y-\lambda\partial_x) \Psi_{k,0}= A_{k,0}(x,y)\Psi_{k,0}, \label{E:F3}\end{eqnarray} and
\begin{gather}
\textsl{$\Psi_{k,0}\in \mathcal D_r$, $Z(\Psi_{k,0})=\{z\}$ with multiplicity $k$},\label{E:F4}\\
\textit{$\partial_x^i\partial_y^jg_{z,\pi_{k+1,0}}, \,\partial_x^i\partial_y^j\Psi_{k,0}\in L_\infty$ for $i+j\le N$, and $\lambda\in\mathbb C\backslash\mathbb \{z\}$ fixed; }\label{E:diff2'}\\	
	\textit{$\partial_x^i\partial_y^j\left(g_{z,\pi_{k+1,0}}- C_3\right)\to 0,\,\partial_x^i\partial_y^j\left(\Psi_{k,0}- C_4\right)\to 0$ as $|x|$, or $|y|\to\infty$, }\label{E:fund6'}\\
	\textit{and $\lambda\in\mathbb C\backslash\mathbb \{z\}$ fixed.}\nonumber
\end{gather}Where $C_3$, $C_4$ are independent of $x$, $y$. Conditions (\ref{E:F1})-(\ref{E:fund6'}), (\ref{E:Laxini}), and Proposition 7.1, 7.4 in \cite{DT07} conclude that  
there exist
\begin{gather}
	\textit{$ \tau_{1,0}(x,y),\cdots,\tau_{n_{k+1},0}(x,y)$ span $\mathrm{Im }\,\pi_{k+1,0}$,}\label{E:tau}\\
	(\partial_y-z\partial_x)\tau_{j,0}= A_{k,0}(x,y)\tau_{j,0},\label{E:tildeA}\\
	\textit{$\partial_x^i\partial_y^h\tau_{j,0}\in L_\infty$ for $i+h\le N$, and $\lambda\in\mathbb C\backslash\mathbb \{z\}$ fixed, }\label{E:diff2''}\\	
	\textit{$\partial_x^i\partial_y^h\left(\tau_{j,0}- c_j\right)\to 0$ as $|x|$, or $|y|\to\infty$, and $\lambda\in\mathbb C\backslash\mathbb \{z\}$ fixed.}\label{E:fund6''}
\end{gather}
Where $c_j$ are independent of $x$, $y$. Besides, by (\ref{E:F3})-(\ref{E:fund6'}), and the induction hypothesis, there exists a fundamental solution  $\eta_{k,0}(x,y)$ of (\ref{E:tildeA}).
Hence there are $H_1$, $\cdots$, $H_{n_{k+1}}:\mathcal C\to M_{n\times 1}$ such that
\begin{gather}
\tau_{j,0}(x,y)=\eta_{k,0}(x,y)H_j(x+zy),\label{E:exp}\\
\textit{$\partial_x^i\partial_y^hH_j\in L_\infty$ for $i+h\le N$, and $\lambda\in\mathbb C\backslash\mathbb \{z\}$ fixed, }\label{E:diff2'''}\\	
	\textit{$\partial_x^i\partial_y^h\left(H_j- c_j'\right)\to 0$ as $|x|$, or $|y|\to\infty$, and $\lambda\in\mathbb C\backslash\mathbb \{z\}$ fixed.}\label{E:fund6'''}
\end{gather} 
Where $c_j'$ are independent of $x$, $y$. Moreover, by induction, one can find $\Psi_k(x,y,t,\lambda)$ solution of (\ref{E:Lax03})-(\ref{E:bdc1}),  $\eta_k$ a solution of (\ref{E:fund4})-(\ref{E:fund6}) with $\Psi_0$, $A_0$, $A$, $B$, $\eta_0$, $\eta$ replaced by $\Psi_{k,0}$, $ A_{k,0}$, $A_k$, $B_k$, $\eta_{k,0}$, $\eta_k$. So by (\ref{E:exp}), (\ref{E:tau}), and (\ref{E:F1}),  one can extend $\tau_{j,0}(x,y)$, $\pi_{k+1,0}$, $\Psi$ by 
\begin{eqnarray*}
&&\tau_j(x,y,t)=\eta_k(x,y,t)H(x+zy+z^2t),\\
&& \pi_{k+1}(x,y,t),\textsl{ a Hermitian projection on the space spanned by $\tau_j$},\\
&&\Psi(x,y,t,\lambda)=g_{z,\pi_{k+1}}\Psi_k(x,y,t,\lambda).
\end{eqnarray*}
Therefore,  we can apply Lemma 8.2, and Proposition 7.1 in \cite{DT07} to prove (\ref{E:Lax03})-(\ref{E:bdc1}) for the case of $k+1$. 

To prove (\ref{E:fund3})-(\ref{E:fund6}) for the case of $k+1$, by a Dai and Terng's construction,  one can find a fundamental solution $\eta(x,y,t)$ of (\ref{E:fund1}) and (\ref{E:fund2}) by defining 
\begin{gather*}
\eta(x,y,t)=\left[\xi_1\,\cdots\,\xi_{n_{k+1}}\,\zeta_{1}\,\cdots\, \zeta_{n-n_{k+1}}\right],\\
\xi_j=\lim_{\epsilon\to 0}g_{z,\pi_{k+1}}(x,y,t,z+\epsilon)\left(a_j(x+(z+\epsilon)y+(z+\epsilon)^2t)\right),\\
\textsl{$a_1$, $\cdots$, $a_{n_{k+1}}$ span $\mathrm{Im }\pi_{k+1}$, }
\textit{$\zeta_1$, $\cdots$, $\zeta_{n-n_{k+1}}$ span $\pi^\bot_{k+1}\eta_k$.}
\end{gather*}
Defining $\eta_0(x,y)$ by (\ref{E:fund5}) and using (\ref{E:Lax03})-(\ref{E:initial}), (\ref{E:diff1}), (\ref{E:bdc1}), the invertibility of $\eta$, one has (\ref{E:fund3}), (\ref{E:diff2}), and (\ref{E:fund6}).
\end{proof}

\begin{theorem}\label{T:backlund}
Suppose both $f(x,y,t,\lambda)$ and $g(x,y,t,\lambda)$ satisfy (\ref{E:Lax03}), (\ref{E:Lax04}) with different $A$, $B$. If $Z\left(g\right)=\{z\}$ with multiplicity of $k$, and $f$ is holomorphic and non-degenerate at $z$, $\bar z$. Then there exist unique $\tilde f$ and $\tilde g$ such that 
\begin{itemize}
\item $Z\left(\tilde g\right)=\{z\}$ with multiplicity of $k$, $\tilde f$ is holomorphic and non-degenerate at $z$, $\bar z$.
	\item $\Psi=\tilde fg=\tilde gf$ satisfies (\ref{E:Lax03}), (\ref{E:Lax04}) with new $A$, $B$.
\end{itemize}
Moreover, if  for $i+j+h\le N$, and $\lambda\in \mathbb C\backslash \left(\{z\}\cup Z (f)\right)$ fixed,
\begin{gather}
\textit{$\partial_x^i\partial_y^j\partial_t^h f,\,\,\partial_x^i\partial_y^j\partial_t^h g\in L_\infty$ }\label{E:diff3}\\
\textit{$\lim_{x^2+y^2+t^2\to\infty}\partial_x^i\partial_y^j\partial_t^h\left(f- C_1\right)=\lim_{x^2+y^2+t^2\to\infty}\partial_x^i\partial_y^j\partial_t^h\left(g- C_2\right)=0$ }\label{E:bdc3'}
\end{gather}
then for $i+j+h\le N$, and $\lambda\in \mathbb C\backslash \left(\{z\}\cup Z (f)\right)$ fixed,
\begin{gather}
\textit{$\partial_x^i\partial_y^j\partial_t^h \Psi\in L_\infty$}\label{E:diff4}\\
\textit{$\lim_{x^2+y^2+t^2\to\infty}\partial_x^i\partial_y^j\partial_t^h\left(\Psi- C_3\right)=0$.}\label{E:boundary}
\end{gather}
Where $C_1$, $C_2$, $C_3$ are independent of $x$, $y$, $t$. 
\end{theorem}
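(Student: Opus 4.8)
The plan is to follow the Bäcklund transformation scheme of Terng--Uhlenbeck \cite{TU00} and Dai--Terng \cite{DT07}: first perform the algebraic construction of $\tilde f$ and $\tilde g$ by induction on the multiplicity $k$, then check that $\Psi=\tilde fg=\tilde gf$ solves a new Lax pair, and finally propagate the $L_\infty$-bounds and the decay through the explicit formulas.

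\textbf{The case $k=1$.} Here $g=g_{z,\pi}=1+\frac{z-\bar z}{\lambda-z}\pi^\bot$ with $\pi=\pi(x,y,t)$ the Hermitian projection onto the span of the columns of some $V(x+zy+z^2t)$, exactly as in the proof of Theorem \ref{T:cauchydiscrete}. Using the reality relation $f(\bar\lambda)^*f(\lambda)=1$ satisfied by $f$ in the present setting, the subspaces $f(x,y,t,z)\,\mathrm{Im}\,\pi(x,y,t)$ and $f(x,y,t,\bar z)\,\mathrm{Im}\,\pi^\bot(x,y,t)$ are orthogonal complements; I would let $\tilde\pi(x,y,t)$ be the Hermitian projection onto $f(x,y,t,z)\,\mathrm{Im}\,\pi(x,y,t)$, set $\tilde g=g_{z,\tilde\pi}$ and $\tilde f=\tilde g\,f\,g^{-1}$. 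A residue computation at $\lambda=z$ and $\lambda=\bar z$, using the orthogonality just noted and the non-degeneracy of $f$ there, shows that $\tilde f$ is holomorphic and non-degenerate at $z,\bar z$, while by construction $\Psi:=\tilde gf=\tilde fg$ and $Z(\tilde g)=\{z\}$ with multiplicity one. That $\Psi$ solves (\ref{E:Lax03})--(\ref{E:Lax04}) with new $A,B$ of the form $(z-\bar z)\partial_x\tilde\pi$, $(z-\bar z)\partial_y\tilde\pi$ is the standard dressing computation (cf.\ (\ref{E:AB}) and Proposition 7.1 in \cite{DT07}).

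\textbf{Induction on $k$.} For $Z(g)=\{z\}$ of multiplicity $k$, I would use the minimal factorization of $g$ into $k$ simple elements with pole at $z$ (Section 4 and Theorem 8.1 in \cite{DT07}) and transport $f$ through these factors one at a time by the $k=1$ step; since each intermediate factor stays holomorphic and non-degenerate at $z,\bar z$, the procedure is unobstructed and yields $\tilde g$ as a product of $k$ simple elements, of multiplicity exactly $k$ (the construction being reversible), together with $\tilde f$ holomorphic and non-degenerate at $z,\bar z$ and $\Psi=\tilde gf=\tilde fg$. The Lax-pair property of $\Psi$ and the existence of the new $A,B$ follow from Propositions 7.1, 7.4 and Lemma 8.2 in \cite{DT07}, as in the proof of Theorem \ref{T:cauchydiscrete}. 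Uniqueness of $\tilde f,\tilde g$ is the uniqueness of the factorization of $\Psi$ into a factor holomorphic and non-degenerate at $z,\bar z$ times a factor with $Z=\{z\}$ of multiplicity $k$.

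\textbf{Regularity, decay, and the main obstacle.} Under (\ref{E:diff3})--(\ref{E:bdc3'}), each simple factor of $\tilde g$ is $1+\frac{z-\bar z}{\lambda-z}\tilde\pi_i^\bot$ with $\tilde\pi_i$ the Hermitian projection onto the image of an explicit product of values at $z$ or $\bar z$ of $f$ and of the simple factors of $g$, applied to holomorphic data $V_i(x+zy+z^2t)$; writing $\tilde\pi_i=W_i(W_i^*W_i)^{-1}W_i^*$ with $W_i$ the associated matrix and normalizing $V_i^*V_i=1$, one has $W_i^*W_i\ge cI$ uniformly in $(x,y,t)$, so $\tilde\pi_i$ and hence $\Psi=\tilde gf$ are smooth with uniformly bounded derivatives up to order $N$, giving (\ref{E:diff4}); and as $x^2+y^2+t^2\to\infty$ the data $f$ and the factors of $g$ converge, so each $\tilde\pi_i$ tends to a constant Hermitian projection and $\Psi$ tends to a constant $C_3$ with decaying derivatives, which is (\ref{E:boundary}) (the limit being direction-independent since $C_1,C_2$ are). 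The one genuinely delicate point, and the only place the new hypotheses (\ref{E:diff3})--(\ref{E:bdc3'}) are really used, is the uniform invertibility of the Gram matrices $W_i^*W_i$, i.e.\ the uniform (not merely pointwise) non-degeneracy of $f$ at $z,\bar z$; this I would extract from $f\to C_1$ with $C_1$ non-degenerate at $z,\bar z$ together with continuity on compact sets. Everything else — the residue identities for $k=1$, the iteration, and the preservation of the Lax pair — is routine given the Dai--Terng machinery.
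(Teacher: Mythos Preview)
Your proposal is correct and follows essentially the same approach as the paper, which simply refers to Theorem 6.1 in \cite{DT07} for the existence and uniqueness of $\tilde f$, $\tilde g$ and then remarks that (\ref{E:diff4})--(\ref{E:boundary}) hold because each component of the minimal factorization of $g$ inherits the bounds (\ref{E:diff3})--(\ref{E:bdc3'}). Your sketch is just a more detailed rendering of that Dai--Terng scheme (simple-pole dressing, induction via the minimal factorization, and propagation of $L_\infty$ and decay through the explicit projection formulas).
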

\begin{proof}
See Theorem 6.1 in \cite{DT07}. Note the regularity property (\ref{E:diff4}) and the asymptotic property (\ref{E:boundary}) are proved by using the property "` each component of the minimal factorization  of $g$ satisfies (\ref{E:diff3}), and (\ref{E:bdc3'})"'. 
\end{proof}

The same argument yields
\begin{theorem}\label{T:backlundinitial}
Suppose both $f(x,y,\lambda)$ and $g(x,y,\lambda)$ satisfy (\ref{E:Laxini}) with different $A_0$. If $Z\left(f\right)=\{z\}$ with multiplicity of $k$, and $g$ is holomorphic and non-degenerate at $z$, $\bar z$. Then there exist unique $\tilde f$ and $\tilde g$ such that 
\begin{itemize}
\item $Z\left(\tilde f\right)=\{z\}$ with multiplicity of $k$, $\tilde g$ is holomorphic and non-degenerate at $z$, $\bar z$.
	\item $\Psi=\tilde fg=\tilde gf$ satisfies (\ref{E:Laxini}) with a new $A_0$.
\end{itemize}
\end{theorem}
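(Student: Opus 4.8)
The plan is to run the proof of Theorem~\ref{T:backlund} with the two–equation Lax pair (\ref{E:Lax03})--(\ref{E:Lax04}) replaced by the single equation (\ref{E:Laxini}) and the holomorphic variable $x+zy+z^2t$ replaced by $x+zy$ throughout; since that proof never uses the $t$–evolution beyond carrying it along, "the same argument yields" is literally accurate, but it is worth recording the main steps. First I would dispose of the case $k=1$. By the factorization results of \cite{U89}, \cite{W88} (cf.\ (\ref{E:p0})--(\ref{E:V})), $f=g_{z,\pi}=1+\frac{z-\bar z}{\lambda-z}\pi^\bot(x,y)$, where $\pi$ is the Hermitian projection onto the span of the columns of a holomorphic $V(x+zy)$, and $f$ solves (\ref{E:Laxini}) with $A_0^{(f)}=(z-\bar z)\partial_x\pi$. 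Since $g$ is holomorphic and non-degenerate at $z$ and solves (\ref{E:Laxini}), $g(x,y,z)$ is a fundamental solution of $(\partial_y-z\partial_x)h=A_0^{(g)}h$ in the sense of Definition~\ref{D:fundamental}. I would then set $\mathrm{Im}\,\tilde\pi:=g(x,y,z)\bigl(\mathrm{Im}\,\pi(x,y)\bigr)$, let $\tilde\pi$ be the Hermitian projection onto it, put $\tilde f:=g_{z,\tilde\pi}$ and $\tilde g:=\tilde f\,g\,f^{-1}$. Then $\Psi:=\tilde f g=\tilde g f$ by construction, and the two things to check are: (i) $\tilde g$ is holomorphic and non-degenerate at $z$ and $\bar z$ — the apparent pole of $\tilde f g f^{-1}$ at $z$ has residue a multiple of $\tilde\pi^\bot g(x,y,z)\pi$, which vanishes precisely because $g(x,y,z)\,\mathrm{Im}\,\pi\subseteq\mathrm{Im}\,\tilde\pi$, and the pole at $\bar z$ cancels by the mirror computation using the reality constraint $g(\bar\lambda)^*g(\lambda)=1$; (ii) $\Psi$ solves (\ref{E:Laxini}) with a new potential, because conjugating $\partial_y-\lambda\partial_x-A_0$ by a factor $g_{z,\cdot}$ preserves its form and changes $A_0$ by a term of the shape $(z-\bar z)\partial_x(\text{projection})$, exactly as in (\ref{E:A0}), (\ref{E:AB}).

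For general multiplicity $k$ I would induct, using the minimal factorization of elements of $\mathcal D_r$ (Theorem~\ref{T:uhlenbeck}, and Section 4 and Theorem 8.1 of \cite{DT07}) to write $f=g_{z,\pi}\,f'$ with $Z(f')=\{z\}$ of multiplicity $k-1$ and $f'$ again a solution of (\ref{E:Laxini}) for some $A_0'$. Applying the induction hypothesis to the pair $(f',g)$ produces $\widetilde{f'}$ and $\widetilde g_1$ with $\widetilde g_1 f'=\widetilde{f'}g$, $Z(\widetilde{f'})=\{z\}$ of multiplicity $k-1$, and $\widetilde g_1$ holomorphic and non-degenerate at $z,\bar z$; applying the $k=1$ step to the pair $(g_{z,\pi},\widetilde g_1)$ — with $\widetilde g_1$ now in the holomorphic slot — produces the missing simple top factor. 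Composing the factors gives $\tilde f$ with $Z(\tilde f)=\{z\}$ of multiplicity $k$, $\tilde g$ holomorphic and non-degenerate at $z,\bar z$, and $\Psi=\tilde f g=\tilde g f$ solving (\ref{E:Laxini}) with a new $A_0$. Uniqueness of $(\tilde f,\tilde g)$ follows from uniqueness of the minimal factorization together with the fact that, at each stage, requiring $\Psi f^{-1}$ (resp.\ $\Psi g^{-1}$) to be regular at $z,\bar z$ pins down the relevant projections.

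The step I expect to be the main obstacle is the bookkeeping that keeps everything inside $\mathcal D_r$ and keeps the potentials of the correct form under dressing: one must verify that $\mathrm{Im}\,\tilde\pi$ is again spanned by the columns of a function of $x+zy$ alone, so that the dressed factor $\tilde f$ is genuinely of the form $g_{z,\tilde\pi}$ with holomorphic data and the new $A_0$ is again $(z-\bar z)\partial_x(\cdot)$ plus the old one. This is exactly where the fundamental–solution language of Definition~\ref{D:fundamental} and the constructions from the proof of Theorem~\ref{T:cauchydiscrete} (Propositions 7.1, 7.4 and Theorem 8.6 of \cite{DT07}) are needed: writing $g(x,y,z)V(x+zy)=\eta_0^{(g)}(x,y)H(x+zy)$ exhibits $\mathrm{Im}\,\tilde\pi$ as generated by the holomorphic $H$. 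Once this is granted, the identities $\Psi=\tilde f g=\tilde g f$, the pole cancellations, and the transformation of $A_0$ are the same elementary computations as in the proof of Theorem 6.1 in \cite{DT07}, and if one wished to propagate regularity and asymptotics as in Theorem~\ref{T:backlund} they would again follow factor by factor along the minimal factorization.
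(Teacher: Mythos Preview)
Your proposal is correct and follows the same approach as the paper: the paper's entire proof is the line ``The same argument yields'' preceding the statement, i.e.\ repeat the proof of Theorem~\ref{T:backlund} (Theorem~6.1 of \cite{DT07}) with the single spectral equation (\ref{E:Laxini}) in place of the pair (\ref{E:Lax03})--(\ref{E:Lax04}). Your detailed sketch of the $k=1$ dressing step and the induction via minimal factorization is exactly the content of that reference, so you have simply unpacked what the paper leaves implicit.
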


\section{The Cauchy problem: Mixed scattering data} \label{S:cauchy2}

\begin{lemma}\label{L:decom}
For $Q\in {\mathbb P}_{\infty,k,1}$, $k\ge 7$, if $Z=Z(\Psi)$ is finite and $Z\cap\mathbb R=\phi$, then there exist uniquely $f$, $\tilde f\in \mathcal D_c$, $g$, $\tilde g\in\mathcal D_r$, such that 
\begin{equation}
\Psi=\tilde fg=\tilde g f.\label{E:decom}
\end{equation}
Moreover, for $i+j\le k-4$, and $\lambda\in \mathbb C\backslash \left(\mathbb R\cup Z\right)$ fixed,
\begin{gather}
\textit{$\partial_x^i\partial_y^j f,\,\,\partial_x^i\partial_y^j g$, $\partial_x^i\partial_y^j\tilde  f,\,\,\partial_x^i\partial_y^j \tilde g\in L_\infty$;}\label{E:diff5}\\
\textit{$\partial_x^i\partial_y^j(f-1),\,\,\partial_x^i\partial_y^j(g-1),\,\,\partial_x^i\partial_y^j(\tilde f-1),\,\,\partial_x^i\partial_y^j(\tilde g-1)\to 0$, as $x,\,y\to\infty$;}\label{E:boundary5}\\
\det f=\det g=\det\tilde f=\det\tilde g=1.\label{E:determinant}
\end{gather}
\end{lemma}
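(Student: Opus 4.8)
The plan is to extract the discrete scattering data from $\Psi$ by means of the Uhlenbeck/Dai--Terng factorization and then invoke the Bäcklund transformation theory recorded in Theorems \ref{T:uhlenbeck}, \ref{T:backlund}, and \ref{T:backlundinitial}. First I would use the hypothesis that $Z=Z(\Psi)$ is finite and disjoint from $\mathbb R$, together with the reality constraint (\ref{E:reality2}) in Theorem \ref{T:LNexistence}, to pair up the poles in $\mathbb C^+$ and $\mathbb C^-$: writing $Z\cap\mathbb C^+=\{z_1,\dots,z_p\}$ with multiplicities, one forms the rational loop $g(x,y,\lambda)$ collecting exactly these poles (and their conjugates) with the correct Blaschke/projector structure, built from the principal parts of $\Psi$ at each $z_j$. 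The key point is that $g\in\mathcal D_r$ with $Z(g)=Z(\Psi)$, and that $\tilde f:=\Psi g^{-1}$ is then holomorphic across $\mathbb C\backslash\mathbb R$ with $\tilde f_\pm$ existing, i.e.\ $\tilde f\in\mathcal D_c$. Symmetrically, factoring the poles ``on the other side'' produces $\tilde g\in\mathcal D_r$ and $f\in\mathcal D_c$ with $\Psi=\tilde g f$; uniqueness of each factorization (normalized by $f,\tilde f\to I$ at $\lambda=\infty$, which is forced since $\Psi\to I$ and the rational factors $\to I$) gives the uniqueness claim in (\ref{E:decom}).

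Next I would produce the Lax-pair data attached to these factors. Since $\Psi$ solves (\ref{E:Lax01}) with potential $Q$, and $g_{z_j,\pi_j}$-type rational factors are exactly the building blocks of the Bäcklund transformations in Section \ref{E:CauchyDiscrete}, one shows that $g$ (resp.\ $\tilde g$) solves an equation of the form (\ref{E:Laxini}) for an appropriate $A_0^{(g)}$ obtained from the projectors $\pi_j$, and then $\tilde f$ (resp.\ $f$) solves (\ref{E:Laxini}) for the conjugated potential. Concretely one peels off one simple factor at a time as in the inductive scheme of the proof of Theorem \ref{T:cauchydiscrete} (minimal factorization, Theorem \ref{T:uhlenbeck}), each step being governed by Theorem \ref{T:backlundinitial}: the columns of the relevant projector satisfy $(\partial_y-z\partial_x)\tau=A\tau$, hence are of the form $\eta(x,y)H(x+zy)$ for a fundamental solution $\eta$, which is precisely what is needed to know that the factor is a genuine element of $\mathcal D_r$ with the stated pole set and to track its behaviour in $(x,y)$.

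The analytic conclusions (\ref{E:diff5})--(\ref{E:boundary5}) would then be read off exactly as in Theorems \ref{T:cauchydiscrete} and \ref{T:backlund}: the regularity $\partial_x^i\partial_y^j(\cdot)\in L_\infty$ and the decay $\partial_x^i\partial_y^j(\cdot-1)\to 0$ as $x,y\to\infty$ for $i+j\le k-4$ propagate through minimal factorization because they hold for $\Psi$ itself by Theorem \ref{T:LNexistence} (bound (\ref{E:bdry'}) and the meromorphy statement, specialized to $\lambda$ away from $Z$), and because each simple rational factor $1+\frac{z-\bar z}{\lambda-z}\pi^\bot(x,y)$ inherits these properties from the projector $\pi$, whose image is spanned by holomorphic $V(x+zy)$ with the correct asymptotics after the $\tilde V$-replacement trick used in the proof of Theorem \ref{T:cauchydiscrete}. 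Finally, the determinant normalization (\ref{E:determinant}) follows from $\det\Psi\equiv 1$ (eq.\ (\ref{E:det})), the fact that each Hermitian-projector factor has determinant $\frac{\lambda-\bar z}{\lambda-z}$ raised to the rank, these telescoping against their conjugates, and the holomorphic, $\to I$ factors having determinant $\equiv 1$ by an application of Liouville's theorem.

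The main obstacle I anticipate is not the algebra of the factorization but the bookkeeping of \emph{multiplicity and minimality}: when $Z$ contains points of high multiplicity, $g$ is not simply a product of distinct-pole factors, and one must invoke the minimal-factorization theory (Section~4 and Theorem~8.1 of \cite{DT07}, quoted via Theorems \ref{T:uhlenbeck} and \ref{T:cauchydiscrete}) to write $g=g_1\cdots g_p$ with $Z(g_i)=\{z_i\}$ of the prescribed multiplicity and to guarantee that peeling factors off $\Psi$ one at a time stays inside the class where Theorem \ref{T:backlundinitial} applies (the complementary factor must remain holomorphic and non-degenerate at $z_i,\bar z_i$). Verifying that this inductive peeling is consistent with the reality condition (\ref{E:reality2}) at every stage — so that the rational factors genuinely lie in $\mathcal D_r$ rather than some larger loop group — is the delicate part; once that is in place, uniqueness and the analytic estimates are routine consequences of the cited theorems.
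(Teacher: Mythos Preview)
Your approach is essentially the reverse of the paper's, and the key step has a genuine gap.

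The paper does \emph{not} build the rational factor first. It uses the finiteness of $Z$ to guarantee that $\Psi_\pm$ exist, applies Theorem~\ref{T:CSDsum} to extract the jump $v_c$ with $\Psi_+=\Psi_-v_c$, and then invokes Theorem~\ref{T:invexistence} to solve the Riemann--Hilbert problem $f_+=f_-v_c$, producing $f\in\mathcal D_c$ directly together with the $L_\infty$ and decay estimates (\ref{E:invbdry'})--(\ref{E:invsym0}). The rational factor is then simply the quotient $\tilde g:=\Psi f^{-1}$: it has no jump across $\mathbb R$ because the jumps of $\Psi$ and $f$ coincide, it is meromorphic with the finite pole set $Z$, and it tends to $I$ at infinity, hence lies in $\mathcal D_r$. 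For the other factorization the paper repeats this with a \emph{left} jump $\Psi_+=\tilde v_c\Psi_-$ and a left Riemann--Hilbert problem, obtaining $\tilde f\in\mathcal D_c$ and then $g=\tilde f^{-1}\Psi\in\mathcal D_r$. Uniqueness is a one-line Liouville argument on $\tilde g^{-1}\tilde g_1=ff_1^{-1}$.

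Your route tries to manufacture $g\in\mathcal D_r$ first by peeling poles off $\Psi$, but the theorems you invoke do not accomplish this. Theorem~\ref{T:uhlenbeck} factors elements of $\mathcal D_r$, and $\Psi\notin\mathcal D_r$ since it has a jump across $\mathbb R$. Theorems~\ref{T:backlund} and~\ref{T:backlundinitial} take as input two \emph{separate} solutions of Lax-type equations and fuse them; they do not split a single given $\Psi$. To carry out your induction you would have to prove, independently of these results, that the pole of $\Psi$ at each $z_j$ has precisely the projector/residue structure that allows removal by a right factor in $\mathcal D_r$, and that the quotient again satisfies a Lax equation so the peeling can continue. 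That is exactly the work the paper avoids by appealing to the Riemann--Hilbert theory of Section~\ref{E:CauchyContinuous}. So the real obstacle is not the multiplicity bookkeeping you flag at the end, but the very first assertion: exhibiting \emph{any} $g\in\mathcal D_r$ with $\Psi g^{-1}$ holomorphic off $\mathbb R$. In the paper's order the analytic conclusions (\ref{E:diff5})--(\ref{E:determinant}) for $f$ are immediate from Theorem~\ref{T:invexistence}, and those for $\tilde g$ follow from the quotient; in your order they would all have to be rebuilt.
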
  
\begin{proof} Since $Z$ is finite. Theorem \ref{T:LNexistence} implies the existence of $\Psi_\pm$. Hence Theorem \ref{T:CSDsum} implies that: $\exists\,v_c(x, y, \lambda)$, $\lambda\in\mathbb{R}$ such that
\begin{gather}
\Psi_+=\Psi_-v_c;\label{E:dual}\\
\textsl{$v_c $ satisfies the algebraic and   analytic constraints (\ref{E:real1'})-(\ref{E:0anal3'}).  } \nonumber
\end{gather}
By Theorem \ref{T:invexistence}, we can find a function $f(x,y,\lambda)$, 
such that $f$ satisfies (\ref{E:diff5})-(\ref{E:determinant}), $f(x,y,\cdot)\in\mathcal D_c$, and
\begin{equation}
f_+=f_-v_c.\label{E:riemann}
\end{equation} 
Hence if we define $\tilde g=\Psi f^{-1}$, then Theorem \ref{T:LNexistence}, $f(x,y,\cdot)\in\mathcal D_c$, (\ref{E:dual}), (\ref{E:riemann}) and $Z$ is finite imply
\[\textit{$\tilde g(x,y,\lambda)\in \mathcal D_r$ and $\tilde g(x,y,\cdot)$ satisfies (\ref{E:diff5})-(\ref{E:determinant}).}
\]
Hence we prove the existence of $f$ and $\tilde g$. 

Besides, using a similar argument as that in the proof of Theorem \ref{T:CSDsum},  there exists $\tilde v_c(x, y, \lambda)$, $\lambda\in\mathbb{R}$ such that
\begin{gather}
\Psi_+=\tilde v_c\Psi_-;\label{E:tildedual}\\
\textsl{$\tilde v_c $ satisfies the algebraic and   analytic constraints (\ref{E:real1'})-(\ref{E:0anal3'}) except (\ref{E:0ana15'}).  } \nonumber
\end{gather}
By a similar argument as that in the proof of Theorem \ref{T:invexistence}, there exists a function $\tilde f(x,y,\lambda)$, 
such that $\tilde f$ satisfies (\ref{E:diff5}), (\ref{E:boundary5}), $\tilde f(x,y,\cdot)\in\mathcal D_c$,  and
\begin{equation}
\tilde f_+=\tilde v_c\tilde f_-.\label{E:tilderiemann}
\end{equation} 
Hence if we define $g=\tilde f^{-1}\Psi $, then Theorem \ref{T:LNexistence}, $\tilde f(x,y,\cdot)\in\mathcal D_c$, (\ref{E:tildedual}), (\ref{E:tilderiemann}) and $Z$ is finite imply $g(x,y,\cdot)\in\mathcal D_r$ and $ g$ satisfies (\ref{E:diff5})-(\ref{E:determinant}).

Suppose $\Psi=\tilde g_1 f_1=\tilde gf$  where $f_1$, $\tilde g_1$ satisfy the statement of the theorem. Hence $\tilde g^{-1}\tilde g_1=f f_1^{-1}$. The right hand side is holomorphic in $\mathbb{C}\backslash\mathbb{R}$, the left hand side is continuous across the real line and tends to $1$ at infinity. Thus the Liouville theorem yields $\tilde g^{-1}\tilde g_1=f f_1^{-1}=1$ and we prove the uniqueness of $f$ and $\tilde g$. The uniqueness of $\tilde f$ and $g$ can be obtained by analogy. 
\end{proof}

\begin{lemma}\label{L:tail}
Suppose $\Psi$ satisfies (\ref{E:Laxini}), $\Psi(x,y,\cdot)\in\mathcal D$, and $Z(\Psi)\subset\mathbb C^+$. If 
\begin{gather}
\Psi=\psi_1 g_1=g_2\psi_2,\label{E:fact1}\\
g_i\in\mathcal D_r,\,\,\,  Z(g_i)=\{z_i\}\subset Z(\Psi),\,\,z_i\in\mathbb C^+,\label{E:fact2}\\
 z_i\notin  Z(\psi_i),\label{E:fact3}
\end{gather}
for $i=1,\,2$. Then 
\[\textit{$\left(\mathcal L_\lambda g_1\right)g_1^{-1}$,  $\left(\mathcal L_\lambda \psi_2\right)\psi_2^{-1}$  are independent of $\lambda$.}
\]Where $\mathcal L_\lambda=\partial_y-\lambda\partial_x$.
\end{lemma}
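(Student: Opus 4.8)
The plan is to express each of the two matrices as a \emph{rational} function of $\lambda$ that is holomorphic at $\infty$ and whose only conceivable poles are ruled out by the hypotheses, and then conclude by Liouville's theorem.

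First I would rewrite (\ref{E:Laxini}) as $(\mathcal L_\lambda\Psi)\Psi^{-1}=A_0(x,y)$, which is manifestly independent of $\lambda$. Substituting $\Psi=\psi_1 g_1$ and $\Psi=g_2\psi_2$ and using that $\mathcal L_\lambda=\partial_y-\lambda\partial_x$ is a first order differential operator in $(x,y)$ with $\lambda$ a spectator, so the Leibniz rule applies, a short computation gives
\[
(\mathcal L_\lambda g_1)g_1^{-1}=\psi_1^{-1}\bigl(A_0\psi_1-\mathcal L_\lambda\psi_1\bigr)=:P_1(\lambda),\qquad
(\mathcal L_\lambda\psi_2)\psi_2^{-1}=g_2^{-1}\bigl(A_0 g_2-\mathcal L_\lambda g_2\bigr)=:P_2(\lambda).
\]
The claim becomes: $P_1$ and $P_2$ do not depend on $\lambda$.

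Next I would locate the possible poles of $P_1$, the case of $P_2$ being entirely parallel. Since $g_1\in\mathcal D_r$ is rational with a pole only at $z_1$, so is $\mathcal L_\lambda g_1=\partial_y g_1-\lambda\partial_x g_1$: the extra factor $\lambda$ does not raise the order of the pole, as one sees by writing $\lambda=(\lambda-z_1)+z_1$ in the partial fraction expansion. Moreover $g_1^{-1}(\lambda)=g_1(\bar\lambda)^*$ is rational with a pole only at $\bar z_1$, and $g_1,g_1^{-1}\to 1$ while $\lambda\partial_x g_1$ stays bounded as $\lambda\to\infty$. Hence $P_1$ is a rational function of $\lambda$, holomorphic at $\infty$, with poles confined to $\{z_1,\bar z_1\}$. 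On the other hand, from the right hand expression $P_1=\psi_1^{-1}(A_0\psi_1-\mathcal L_\lambda\psi_1)$ together with the reality relation $\psi_1^{-1}(\lambda)=\psi_1(\bar\lambda)^*$, the function $P_1$ is holomorphic in $\lambda\in\mathbb C\setminus\mathbb R$ away from $Z(\psi_1)\cup\overline{Z(\psi_1)}$. By hypothesis $z_1\notin Z(\psi_1)$; and — the one substantive point, see below — $\psi_1$ is in fact non-degenerate at $z_1$, which via $\psi_1(\bar\lambda)^*\psi_1(\lambda)=1$ means $\psi_1$ is holomorphic at $\bar z_1$, i.e. $\bar z_1\notin Z(\psi_1)$ and $z_1\notin\overline{Z(\psi_1)}$. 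Therefore neither $z_1$ nor $\bar z_1$ belongs to $Z(\psi_1)\cup\overline{Z(\psi_1)}$, so $P_1$ — being rational with poles confined to $\{z_1,\bar z_1\}$ — has no finite poles at all; a rational function with no finite poles is a polynomial, and one also holomorphic at $\infty$ is constant, so $P_1$ is independent of $\lambda$. The identical argument applied to $\Psi=g_2\psi_2$, using $z_2\notin Z(\psi_2)$, shows $P_2$ is independent of $\lambda$.

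The main obstacle is precisely the non-degeneracy of $\psi_i$ at $z_i$ used above, i.e. that the rational factor $g_i$ absorbs \emph{exactly} (and not more than) the singular part of $\Psi$ at $z_i$; this is the minimality built into the Uhlenbeck/Dai--Terng factorization (Theorem~\ref{T:uhlenbeck} and \cite{DT07}). If one prefers a self-contained argument one factors $g_i$ into simple elements $g_{z_i,\pi}=1+\frac{z_i-\bar z_i}{\lambda-z_i}\pi^\bot$ and computes directly that $(\mathcal L_\lambda g_{z_i,\pi})g_{z_i,\pi}^{-1}=(z_i-\bar z_i)\partial_x\pi$, the apparent poles at $z_i$ and $\bar z_i$ cancelling because $\mathrm{Im}\,\pi$ is spanned by the columns of a holomorphic $V(x+z_iy)$: then $\mathcal L_{z_i}$ annihilates $V(x+z_iy)$, giving $(\mathcal L_{z_i}\pi)\pi=0$, and $\mathcal L_{\bar z_i}$ annihilates $V(x+z_iy)^*$, giving $(\mathcal L_{\bar z_i}\pi^\bot)\pi^\bot=0$; one then propagates this through the factorization by induction on the multiplicity $k$, exactly as in the proof of Theorem~\ref{T:cauchydiscrete}.
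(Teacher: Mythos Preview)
Your proof is correct and follows the same Liouville-type argument as the paper: express the quantity via both sides of the factorization $\Psi=\psi_1 g_1=g_2\psi_2$, use one representation to confine the possible poles to $\{z_i,\bar z_i\}$ and the other to exclude them, then invoke boundedness at infinity. You are in fact more careful than the paper on one point: the paper asserts directly from (\ref{E:fact2})--(\ref{E:fact3}) that $(\mathcal L_\lambda\psi_2)\psi_2^{-1}$ is holomorphic at both $z$ and $\bar z$, but this tacitly requires $\psi_2$ to be non-degenerate at $z_2$ (equivalently, via the reality relation, holomorphic at $\bar z_2$), which does not follow from $z_2\notin Z(\psi_2)$ alone; you correctly flag this as the substantive step and identify it with the minimality of the Dai--Terng/Uhlenbeck factorization, which is exactly how it is guaranteed in the paper's applications.
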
 
\begin{proof} The $\lambda$-independence of $\left(\mathcal L_\lambda g_1\right)g_1^{-1}$ has been proved by Theorem 6.5 in \cite{DT07}. We can use the same method to show the $\lambda$-independence of $\left(\mathcal L_\lambda \psi_2\right)\psi_2^{-1}$. That is, by (\ref{E:fact1}) and a direct computation, we have 
\begin{equation}
g_2^{-1}\left[\left(\mathcal L_\lambda \Psi\right)\Psi^{-1}-\left(\mathcal L_\lambda g_2\right)g_2^{-1}\right]g_2=\left(\mathcal L_\lambda \psi_2\right)\psi_2^{-1}.\label{E:rewrite}
\end{equation}
Note (\ref{E:fact2}), (\ref{E:fact3}) imply that $\left(\mathcal L_\lambda \psi_2\right)\psi_2^{-1}$ is holomorphic at $z$, $\bar z$.  Using (\ref{E:fact2}), (\ref{E:Laxini}) and the left side of (\ref{E:rewrite}),  one has that $\left(\mathcal L_\lambda \psi_2\right)\psi_2^{-1}$ is holomorphic outside $\{z,\,\bar z\}$. Moreover, $\left(\mathcal L_\lambda \psi_2\right)\psi_2^{-1}$ is bounded at $\infty$ by  $g_2\in\mathcal D_r$, (\ref{E:Laxini}), and the left side of (\ref{E:rewrite}). By the Liouville's theorem,  $\left(\mathcal L_\lambda \psi_2\right)\psi_2^{-1}$ is independent of $\lambda$. \end{proof}

\textit{\emph{Proof of \textbf{Theorem \ref{T:cauchy2}}:}} Since (\ref{E:chiral}) is the compatibility condition of (\ref{E:Lax01}), (\ref{E:Lax02}). To prove Theorem \ref{T:cauchy2}, it is sufficient to show the following theorem.

\begin{theorem}\label{T:cauchymix}
For $Q_0\in {\mathbb P}_{\infty,k,1}$, $k\ge 7$, if $Z=Z(\Psi_0)$ is finite and $Z\cap\mathbb R=\phi$, 
then  $\exists\Psi(x,y,t,\lambda)$, $Q(x,y,t)$ such that 
\begin{eqnarray}
&&(\partial_y-\lambda\partial_x)\Psi(x,y,t,\lambda)=\left(\partial_x Q(x,y,t)\right)\Psi(x,y,t,\lambda), \label{E:Lax032}\\
&&(\partial_t-\lambda\partial_y)\Psi(x,y,t,\lambda)=\left(\partial_y Q(x,y,t)\right)\Psi(x,y,t,\lambda), \label{E:Lax042}\\
&&Q(x,y,0)=Q_0(x,y).\nonumber
\end{eqnarray}
Moreover,  for $i+j+h\le N$,  and $\lambda\in \mathbb C\backslash \left(\mathbb R\cup Z\right)$ fixed,
\begin{eqnarray}
&&\textit{$\partial_x^i\partial_y^j\partial_t^h\Psi\in L_\infty$,}\label{E:diff6}\\
&&\textit{$\partial_x^i\partial_y^j\partial_t^h\left(\Psi-1\right)\to 0$ as $x$, or $y$, or $t\to\infty$.}\label{E:diff7}
\end{eqnarray}
\end{theorem}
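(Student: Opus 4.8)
The plan is to combine the three ingredients assembled in the excerpt: the continuous-data Cauchy solver (Theorem \ref{T:cauchy}), the discrete-data Cauchy solver (Theorem \ref{T:cauchydiscrete}), and the B\"acklund permutability (Theorems \ref{T:backlund}, \ref{T:backlundinitial}), glued along the factorization of $\Psi_0$ provided by Lemma \ref{L:decom}. Since $Z=Z(\Psi_0)$ is finite and avoids $\mathbb R$, the reality relation (\ref{E:reality2}) lets us assume (after relabeling) $Z\subset\mathbb C^+$; Lemma \ref{L:decom} then writes $\Psi_0=\tilde f_0 g_0=\tilde g_0 f_0$ with $f_0,\tilde f_0\in\mathcal D_c$ carrying the continuous data and $g_0,\tilde g_0\in\mathcal D_r$ carrying the discrete data, all satisfying the regularity/decay (\ref{E:diff5})--(\ref{E:determinant}).

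First I would evolve the continuous part: $f_0$ is the initial eigenfunction of the potential $Q_0^c$ with $\partial_xQ_0^c=(\mathcal L_\lambda f_0)f_0^{-1}|_{\lambda=0}$ (well-defined and in $\mathbb P_{\infty,k,1}$ because $f_0\in\mathcal D_c$ has no poles, so Theorem \ref{T:cauchy} applies after checking that this $Q_0^c$ indeed lies in the required function space — this is where the estimates from \cite{Wu} summarized in Theorems \ref{T:LNexistence}--\ref{T:invexistence} are invoked). This produces a smooth global $f(x,y,t,\lambda)\in\mathcal D_c$ solving (\ref{E:Lax03}), (\ref{E:Lax04}) with $f(x,y,0,\cdot)=f_0$, and satisfying the $L_\infty$/decay bounds (\ref{E:diff6}), (\ref{E:diff7}) with $N=k-4$. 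Second, I would evolve the discrete part: apply Theorem \ref{T:cauchydiscrete} to $g_0$ (or rather to each rank-one factor of its minimal factorization, iterating over the multiplicity $k$ as in the induction of that theorem's proof), obtaining a rational-in-$\lambda$ solution $g(x,y,t,\lambda)\in\mathcal D_r$ with $Z(g)=Z$, solving (\ref{E:Lax03}), (\ref{E:Lax04}) and satisfying (\ref{E:rational1})--(\ref{E:bdc1}). The hypotheses (\ref{E:rational})--(\ref{E:bdc}) of Theorem \ref{T:cauchydiscrete} for $g_0$ are exactly (\ref{E:diff5})--(\ref{E:determinant}) of Lemma \ref{L:decom}, so this step is immediate once the decomposition is in hand.

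Third — the heart of the argument — I would fuse the two evolutions by a B\"acklund transformation. At $t=0$, $f_0$ is holomorphic and non-degenerate at $z,\bar z$ for each $z\in Z$ (since $f_0\in\mathcal D_c$) and $Z(g_0)=Z$; at general $t$, $f(x,y,t,\cdot)\in\mathcal D_c$ retains holomorphy and nondegeneracy at $Z$, while $g(x,y,t,\cdot)$ keeps $Z(g)=Z$. Theorem \ref{T:backlund} then yields unique $\tilde f,\tilde g$ with $\Psi:=\tilde f g=\tilde g f$ solving (\ref{E:Lax03}), (\ref{E:Lax04}) with some new $A,B$, and — crucially — the regularity (\ref{E:diff4}) and decay (\ref{E:boundary}) propagate because $f,g$ satisfy (\ref{E:diff3}), (\ref{E:bdc3'}) (which are just (\ref{E:diff6}), (\ref{E:diff7}) for $f$ and (\ref{E:diff1}), (\ref{E:bdc1}) for $g$). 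One must then check $\Psi(x,y,0,\cdot)=\Psi_0$: this is where Theorem \ref{T:backlundinitial} and the uniqueness in Lemma \ref{L:decom} combine — the B\"acklund transform built from $f_0,g_0$ produces $\tilde f_0 g_0$, which by Lemma \ref{L:decom} equals $\Psi_0$; and the $t$-dependent $\tilde f,\tilde g$ restrict at $t=0$ to these initial data by the uniqueness clause of Theorem \ref{T:backlund}. The remaining point is that the new $A,B$ have the Ward form $A=\partial_xQ$, $B=\partial_yQ$: this follows because $\Psi\in\mathcal D$ satisfies the reality condition (\ref{E:invsym1})/(\ref{E:reality2}) — inherited from $\tilde f,g$ both being unitary-type — and $\det\Psi\equiv1$, which together force $A,B\in su(n)$-valued and, via the compatibility (\ref{E:chiral}) being the integrability condition, exact; setting $Q=\int^x A\,dx'$ (convergent by the decay) gives $Q(x,y,0)=Q_0$ from $A(\cdot,0)=\partial_xQ_0$.

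The main obstacle I anticipate is the $t=0$ matching together with verifying that the B\"acklund-produced pair $(A,B)$ is genuinely of the form $(\partial_xQ,\partial_yQ)$ for a single well-defined $Q$ with the asserted decay — i.e. that the "$\mathcal L_\lambda$-independence" Lemma \ref{L:tail} (which guarantees $(\mathcal L_\lambda g)g^{-1}$ and $(\mathcal L_\lambda\psi)\psi^{-1}$ are $\lambda$-independent, hence honest potentials) applies on the nose to the factorization $\Psi=\tilde f g=\tilde g f$ at every $t$, and that the $x$-integration defining $Q$ converges and is compatible with the $y$- and $t$-equations. A secondary subtlety is bookkeeping the derivative count: Lemma \ref{L:decom} gives $i+j\le k-4$ for the initial data, Theorem \ref{T:cauchy} and Theorem \ref{T:cauchydiscrete} preserve this with $N=k-4$, and Theorem \ref{T:backlund} preserves it again, so the final $\Psi$ enjoys (\ref{E:diff6}), (\ref{E:diff7}) with $N=k-4$ — one must confirm no derivative is lost at the gluing step, which is exactly the content of the remark appended to the proof of Theorem \ref{T:backlund} about each minimal factor inheriting the bounds.
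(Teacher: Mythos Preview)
Your overall strategy --- decompose $\Psi_0$ via Lemma \ref{L:decom}, evolve the continuous and discrete pieces separately, then fuse by B\"acklund --- matches the paper's. However, there is a real organizational gap. As stated, Lemma \ref{L:tail}, Theorem \ref{T:cauchydiscrete}, and Theorem \ref{T:backlund} all require the rational factor to have a \emph{single} pole location $\{z\}$; none of them applies directly when $g_0$ (or $\tilde g_0$) has several distinct poles. Your parenthetical about ``iterating over the multiplicity $k$'' conflates multiplicity at one pole (which is what the induction inside Theorem \ref{T:cauchydiscrete} handles) with the number of distinct pole locations (separated by Theorem \ref{T:uhlenbeck}), and it does not explain how to evolve the full $g_0$ or how to perform a multi-pole B\"acklund fusion with $f$. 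Relatedly, Lemma \ref{L:tail} is not an a~posteriori check at general $t$ as you suggest; it is used at $t=0$, \emph{before} any evolution, to certify that the peeled-off factors are themselves eigenfunctions of some $\lambda$-independent potential --- and it too needs a single-pole $g_i$.

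The paper's fix is to induct on the number $k$ of distinct poles. After the Uhlenbeck factorization $g'_0=g'_{0,1}\cdots g'_{0,k}$ one writes $\Psi'_0=\tilde f_{k-1,0}\,g'_{0,k}=\tilde g'_{0,1}\,f_{k-1,0}$ with $f_{k-1,0}=\tilde g'_{0,2}\cdots\tilde g'_{0,k}f_0$ carrying both the continuous data and $k-1$ of the poles. Now Lemma \ref{L:tail} applies on the nose (each outer factor has one pole), showing that $g'_{0,k}$ and $f_{k-1,0}$ are each eigenfunctions; the first evolves by Theorem \ref{T:cauchydiscrete}, the second by the \emph{induction hypothesis} (not by Theorem \ref{T:cauchy} alone, since it still has poles), and a single-pole application of Theorem \ref{T:backlund} fuses them, with Theorem \ref{T:backlundinitial} securing the $t=0$ match. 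By contrast, the issue you flag as the main obstacle --- that the fused $(A,B)$ has the Ward form $(\partial_xQ,\partial_yQ)$ --- the paper disposes of in one line as a consequence of the compatibility of (\ref{E:mixLax032})--(\ref{E:mixLax042}).
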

\begin{proof}
By Lemma \ref{L:decom}, one has $\Psi_0=\tilde f_0g_0=\tilde g _0f_0$ with $f_0,\,\tilde f_0\in \mathcal D_c$, $g_0$, $\tilde g_0\in\mathcal D_r$ and (\ref{E:diff5}), (\ref{E:boundary5}) being satisfied. Since $Z=Z(\Psi_0)$ is finite and $Z\cap\mathbb R=\phi$. Successively multiplying $\Psi_0$ by factors as $\frac{\lambda-z_j}{\lambda-\bar z_j}$, $z_j\in Z(\Psi_0)\cap\mathbb C^-$, one obtains 
\begin{eqnarray}
&&(\partial_y-\lambda\partial_x)\Psi_0'(x,y,\lambda)=A'_0(x,y)\Psi_0'(x,y,\lambda), \label{E:mixLaxini}\\
&&\Psi_0'=\tilde f_0g_0'=\tilde g_0' f_0,\label{E:mixfact}\\
&&f_0,\,\tilde f_0\in \mathcal D_c,\,\,g'_0,\,\,\tilde g'_0\in\mathcal D_r,\label{E:mixfg}\\
&&Z(\Psi_0')=Z(g_0')=Z(\tilde g_0')\subset\mathbb C^+,\label{E:c+}
\end{eqnarray}
and for $i+j\le k-4$, and $\lambda\in \mathbb C\backslash \left(\mathbb R\cup Z\right)$,
\begin{gather}
\textit{$\partial_x^i\partial_y^j f_0,\,\,\partial_x^i\partial_y^j g'_0\in L_\infty$,}\label{E:diff9}\\
\textit{$\lim_{x^2+y^2\to\infty}\partial_x^i\partial_y^j \left(f_0-I\right)=\lim_{x^2+y^2\to\infty}\partial_x^i\partial_y^j \left(g'_0- CI\right)= 0$  }\label{E:boundary9}
\end{gather} Where $C$ is a scalar which is independent of $x$, $y$. Moreover, by (\ref{E:mixfg}), (\ref{E:c+}), and Theorem \ref{T:uhlenbeck}, one can factorize 
\begin{gather}
g'_0=g'_{0,1}\cdots g'_{0,k},\,\,\,\tilde g_0'=\tilde g'_{0,1}\cdots \tilde g'_{0,k}, \qquad g'_{0,j},\,\tilde g'_{0,j}\in\mathcal D_r, \label{E:mixlength}\\
\textsl{$Z(g'_{0,j})=\{\alpha_j\}\subset\mathbb C^+$, $Z(\tilde g'_{0,j})=\{\beta_j\}\subset\mathbb C^+$,}\label{E:ab1}\\
 \textit{$\alpha_i\neq\alpha_j$, $\beta_i\neq\beta_j$, if $i\neq j$.}\label{E:ab2}
\end{gather}

To prove the theorem, it is equivalent to showing the solvability of the following Cauchy problem
\begin{eqnarray}
&&(\partial_y-\lambda\partial_x)\Psi'(x,y,t,\lambda)=A'(x,y,t)\Psi'(x,y,t,\lambda), \label{E:mixLax032}\\
&&(\partial_t-\lambda\partial_y)\Psi'(x,y,t,\lambda)=B'(x,y,t)\Psi'(x,y,t,\lambda), \label{E:mixLax042}\\
&&\Psi'(x,y,0,\lambda)=\Psi'_0(x,y,\lambda),\,\,\,A'(x,y,0)=A'_0(x,y),\label{E:mixinitial2}
\end{eqnarray}
as well  for $i+j+h\le k-4$, and $\lambda\in \mathbb C\backslash \left(\mathbb R\cup Z\right)$ fixed,
\begin{gather}
\textit{$\partial_x^i\partial_y^j\partial_t^h \Psi'\in L_\infty$}\label{E:diff10}\\
\textit{$\partial_x^i\partial_y^j\partial_t^h\left(\Psi'-CI\right)\to 0$ as $|x|$, or $|y|$, or $|t|\to\infty$.}\label{E:boundary10}
\end{gather}
Since (\ref{E:Lax032}), (\ref{E:Lax042}) follow by computing the compatibility conditions of (\ref{E:mixLax032}), (\ref{E:mixLax042}). We are going to justify the solvability by inducing on the length $k$ (the multiplicity) in  (\ref{E:mixlength}).

For $k=1$, by (\ref{E:mixLaxini}), (\ref{E:mixfact}), Lemma \ref{L:tail}, there exist $A'_{r,0}(x,y)$, $A_{c,0}(x,y)$ such that
\begin{eqnarray}
&&(\partial_y-\lambda\partial_x)g'_0(x,y,\lambda)=A'_{r,0}(x,y)g'_0(x,y,\lambda),\label{E:mixLaxiniL}\\
&&(\partial_y-\lambda\partial_x)f_0(x,y,\lambda)=A_{c,0}(x,y)f_0(x,y,\lambda).\label{E:mixLaxiniR}
\end{eqnarray}
By (\ref{E:mixLaxiniL}), (\ref{E:mixfg}), (\ref{E:diff9}), (\ref{E:boundary9}) and Theorem \ref{T:cauchydiscrete}, one can find 
\begin{eqnarray}
&&(\partial_y-\lambda\partial_x)g'(x,y,t,\lambda)=A'_{r}(x,y,t)g'(x,y,t,\lambda), \label{E:mixLax032L}\\
&&(\partial_t-\lambda\partial_y)g'(x,y,t,\lambda)=B'_{r}(x,y,t)g'(x,y,t,\lambda), \label{E:mixLax042L}\\
&&g'(x,y,0,\lambda)=g'_0(x,y,\lambda),\,\,\,A'_{r}(x,y,0)=A'_{r,0}(x,y),\label{E:mixinitial2L}\\
&&g'(x,y,t,\cdot)\in\mathcal D_r,\,\,\,Z(g'(x,y,t,\cdot))=\{\alpha_1\},\label{E:g'}
\end{eqnarray}
and for $i+j+h\le k-4$, $\lambda\in \mathbb C\backslash \{z\}$ fixed,
\begin{gather}
\textit{$\partial_x^i\partial_y^j \partial_t^h g'\in L_\infty$ }\label{E:diff11}\\
\textit{$\lim_{x^2+y^2+t^2\to\infty}\partial_x^i\partial_y^j \partial_t^h\left(g'- CI\right)=0$.}\label{E:boundary11}
\end{gather}
By (\ref{E:mixLaxiniR}), (\ref{E:mixfg}), (\ref{E:diff9}), (\ref{E:boundary9}),  Lemma \ref{L:decom}, and Theorem \ref{T:cauchy}, one can find 
\begin{eqnarray}
&&(\partial_y-\lambda\partial_x)f(x,y,t,\lambda)=A_{c}(x,y,t)f(x,y,t,\lambda), \label{E:mixLax032R}\\
&&(\partial_t-\lambda\partial_y)f(x,y,t,\lambda)=B_{c}(x,y,t)f(x,y,t,\lambda),\label{E:mixLax042R}\\
&&f(x,y,0,\lambda)=f_0(x,y,\lambda),\,\,\,A_{c}(x,y,0)=A_{c,0}(x,y),\label{E:mixinitial2R}\\
&&f(x,y,t,\cdot)\in\mathcal D_c,\label{E:f}
\end{eqnarray}
and for $i+j+h\le k-4$, $\lambda\in \mathbb C\backslash \mathbb R$,
\begin{gather}
\textit{$\partial_x^i\partial_y^j \partial_t^h f\in L_\infty$,}\label{E:diff12}\\
\textit{$\lim_{x^2+y^2+t^2\to\infty}\partial_x^i\partial_y^j \partial_t^h\left(f- I\right)=0$.}\label{E:boundary12}
\end{gather}
Therefore, applying Theorem \ref{T:backlund}, we can find a unique 
\begin{equation}
\Psi'(x,y,t,\lambda)=\tilde f g'=\tilde g' f \label{E:pfg}
\end{equation}
such that $\Psi'$ satisfies (\ref{E:mixLax032}), (\ref{E:mixLax042}), (\ref{E:diff10}), (\ref{E:boundary10}), 
and $Z(\tilde g'(x,y,t,\cdot))=\{\alpha_1\}$. 

One the other hand, by (\ref{E:mixfact}), (\ref{E:mixfg}),  (\ref{E:mixLaxiniL}), (\ref{E:mixLaxiniR}), 
and Theorem \ref{T:backlundinitial}, we have $\Psi_0(x,y,\lambda)$, $\tilde f_0(x,y,\lambda)$, $\tilde g'_0(x,y,\lambda)$ are the unique functions which satisfiy (\ref{E:mixLaxini}). Since $g'(x,y,0,\lambda)=g'_0(x,y,\lambda)$, $f(x,y,0,\lambda)=f_0(x,y,\lambda)$,  we derive $\Psi'(x,y,0,\lambda)=\Psi'_0(x,y,\lambda)$. As a result, $A'(x,y,0)=A'_0(x,y)$ by (\ref{E:mixLaxini}), (\ref{E:mixLax032}) and we prove the theorem in case of $k=1$.

For $k>1$, we define 
\begin{eqnarray}
\tilde f_{k-1,0}&=&\tilde f_0g'_{0,1}\cdots g'_{0,k-1},\label{E:mixfactk1}\\
f_{k-1,0}&=&\tilde g'_{0,2}\cdots \tilde g'_{0,k}f_0.\label{E:mixfactk2}
\end{eqnarray}
Then (\ref{E:mixfact}), (\ref{E:mixlength}), (\ref{E:mixfactk1}), (\ref{E:mixfactk2}) imply
\begin{eqnarray}
&&\Psi_0'=\tilde f_{k-1,0}g_{0,k}'=\tilde g_{0,1}' f_{k-1,0},\label{E:mixfactk}\\
&&Z(g_{0,k}')=\{\alpha\}\subset\mathbb C^+,\,\,\alpha\notin Z(\tilde f_{k-1,0})\subset\mathbb C^+,\label{E:phiL}\\
&&Z(\tilde g_{0,1}')=\{\beta\}\subset\mathbb C^+,\,\,\beta\notin Z( f_{k-1,0})\subset\mathbb C^+.\label{E:phiR}
\end{eqnarray}
So 
by (\ref{E:mixLaxini}), (\ref{E:mixfactk})-(\ref{E:phiR}), and Lemma \ref{L:tail}, there exist $A''_{L,0}(x,y)$, $A''_{R,0}(x,y)$ such that
\begin{eqnarray*}
&&(\partial_y-\lambda\partial_x)g_{0,k}'(x,y,\lambda)=A''_{r,0}(x,y)g_{0,k}'(x,y,\lambda),\label{E:mixLaxiniL1}\\
&&(\partial_y-\lambda\partial_x)f_{k-1,0}(x,y,\lambda)=A''_{c,0}(x,y)f_{k-1,0}(x,y,\lambda).\label{E:mixLaxiniR1}
\end{eqnarray*}

Hence one can prove the theorem in case of $k>1$ by repeating the previous arguments except using the induction hypothesis to obtain the solution of the Cauchy problem
\begin{eqnarray*}
&&(\partial_y-\lambda\partial_x)f_{k-1}(x,y,t,\lambda)=A''_{c}(x,y,t)f_{k-1}(x,y,t,\lambda), \label{E:mixLax032R1}\\
&&(\partial_t-\lambda\partial_y)f_{k-1}(x,y,t,\lambda)=B''_{c}(x,y,t)f_{k-1}(x,y,t,\lambda),\label{E:mixLax042R1}\\
&&f_{k-1}(x,y,0,\lambda)=f_{k-1,0}(x,y,\lambda),\,\,\,A''_{c}(x,y,0)=A''_{c,0}(x,y).\label{E:mixinitial2R1}
\end{eqnarray*}
\end{proof} 

We conclude this report by a brief remark on examples of  $Q_0\in \mathbb P_{\infty,k,1}$, $k\ge 7$, and   $Z(\Psi_0)<\infty$. 
First  we let
$v(x,y,\lambda)=v(x+\lambda y,\lambda)$ satisfy
\[\det (v)=1,\quad
v=v^*>0,\quad v-1\in \mathcal{S},
\]
and for $\forall i$, $j,\,h\ge 0$, 
\begin{eqnarray*}
 &&\partial_x^i\partial_y^j\partial_\lambda^h\left(v-1\right)\in L_2({\rm R},\,d\lambda)\cap L_1({\rm R},\,d\lambda)\textit{ uniformly}, \\
 &&\partial_x^i\partial_y^j\partial_\lambda^h\left(v-1\right)\to 0 \,\,\textit{ in $L_2({\rm R},\,d\lambda)$ uniformly, as}\,\,|x|,\,|y|\to\infty.
\end{eqnarray*}
Here $ \mathcal{S}$ is the space of Schwartz functions. We can solve the inverse problem and obtain $\psi_0\in \mathcal{S}$ by the argument in proving Theorem \ref{T:LNexistence}. Note here we need to use the reality condition $v=v^*>0$ to show the global solvability. 
Moreover, by using the fomula $q_0(x,y)=\frac 1{2\pi i}\int_{\rm R}\psi_{0,-}(v-1)d\xi$, one obtains that $q_0$ is Schwartz and possesses purely continuous scattering data.

Therefore, one can successively apply the theory of adding (or substracting) 1-soliton in \cite{DTU} to construct $Q_0$ such that $Q_0\in \mathbb P_{\infty,k,1}$, $k\ge 7$, and $Z(\Psi_0)<\infty$. 

\vskip.2in

\noindent{\textbf{Acknowledgements}}

I would like to express my gratitude to  Chuu-Lian Terng who has taught me so much about the algebraic and geometric structure of the  integrable systems. 

\end{document}